\documentclass[a4paper, USenglish]{lipics-v2021}
\nolinenumbers
\usepackage{graphicx} 
\usepackage{amsthm}
\usepackage{amsmath}
\usepackage{amssymb}
\usepackage[dvipsnames]{xcolor}

\title{Substep Linearizability: Relaxing Atomicity While Preserving Sequential Specification}

\author{Finn Hackett\footnote{\label{equalcontrib}Both authors contributed to this work equally.}}
{University of British Columbia, Canada}
{}{}{}

\author{Evan Wrench\footnotemark[\value{footnote}]}
{University of British Columbia, Canada}
{}{}{}

\author{Ivan Beschastnikh}
{University of British Columbia, Canada}
{}{}{}

\author{Yuanhao Wei}
{University of British Columbia, Canada}
{}{}{}

\authorrunning{F. Hackett, E. Wrench, I. Beschastnikh, and Y. Wei}

\Copyright{Finn Hackett, Evan Wrench, Ivan Beschastnikh, Yuanhao Wei}

\ccsdesc{Theory of computation~Program specifications}
\ccsdesc{Theory of computation~Concurrency}

\keywords{Concurrency, Formal Specifications, Linearizability}

\relatedversion{}

\begin{document}

\maketitle

\begin{abstract}
    Linearizability is a standard correctness condition for concurrent objects.
    It takes a correct by construction sequential specification of a concurrent data structure, and produces a composable specification under atomic interleavings of operations.
    However, some operations like atomic exchangers are not specifiable atomically, making it difficult to reason compositionally about their specifications.
    Interval linearizability is the state of the art in relaxing linearizability's constraints.
    Formally it is a powerful framework, but it requires an interval-sequential specification of the target data structure, which we show is challenging to correctly generate or understand, even for simple operations on a single concurrent register.
    We propose substep linearizability, which we show is equivalent to interval linearizability using X-CCorrectness, a relaxed concurrency framework we created to compare existing linearizability semantics.
    We show that substep linearizability's sequential specifications are suitable for previously infeasible specification tasks, while being comparable in complexity to linearizability's sequential specifications.
    For evaluation, we model an atomic exchanger, a popular non-linearizable queue structure, and snapshot isolation between database transactions.
\end{abstract}



\def\H{\mathcal{H}}
\def\S{\mathcal{S}}

\newcommand{\hao}[1] {\textcolor{ForestGreen}{\bf (Hao)~#1}{\typeout{#1}}}
\newcommand{\hedit}[1] {\textcolor{ForestGreen}{{#1}}}
\newcommand{\evan}[1] {\textcolor{RedOrange}{\bf (Evan)~#1}{\typeout{#1}}}
\newcommand{\eedit}[1] {\textcolor{RedOrange}{{#1}}}

\section{Introduction}\label{sec:intro}

Linearizability~\cite{classic-lin90} is a standard correctness condition for concurrent objects.
It has enjoyed relative success when applied in practice, both since the resulting correctness properties are composable, and since the correctness condition for operations on a linearizable concurrent object can be written as a simplified sequential implementation.
The same cannot be said of objects whose operations exhibit non-linearizable edge cases by design.
Such objects are relatively common in practice, but there is no convenient way to sequentially specify their behavior.

Interval linearizability~\cite{interval-lin18} is the state of the art in relaxing linearizability's atomicity requirement, thus allowing more complex interactions between concurrent operations on the same object, while maintaining linearizability's compositionality guarantees.
Formally it is an expressive and flexible way to reason about non-linearizable behaviors, but its specification style, interval-sequential specifications, is challenging to write.
Interval-sequential specifications work well when multiple non-linearizable operations can be summarized into one aggregate step, but their effectiveness breaks down when an operation takes more than one atomic step.

For instance, consider an operation $wr(x)$ (``Write-Read $x$''), operating on a shared register $r$, which performs the following atomic operations: write $r \gets x$ (write $x$ to $r$); read and respond with $r$'s value.
Concurrent invocations of $wr(x)$ may interleave read/write steps arbitrarily, leading to non-linearizable results such as two concurrent $wr(x)$ and $wr(x')$ both reading $x'$ due to $x$ being overwritten before it can be read back.
An idiomatic interval-sequential specification of $wr(x)$'s behavior is possible, but it relies on a subtle argument that informally requires enumerating which of the read and write permutations are actually possible, and formally requires its own proof.

Extending $wr(x)$ by just one additional atomic read to get $wrr(x)$ (``Write-Read-Read $x$'') is easy to write operationally: write $r \gets x$; read $r_1 \gets r$; read $r_2 \gets v$ and respond with the observed tuple $(r_1, r_2)$.
However, our interval-sequential specification for $wrr(x)$ must be completely reworked to explicitly quantify over read-write interleavings, and becomes more complex.
Given a sequential specification is typically used as an oracle of correctness on which to base in-depth analysis, writing the specification itself should have minimal complexity.
The state of the art does not satisfy this condition.

\begin{figure}
    \centering
    \includegraphics{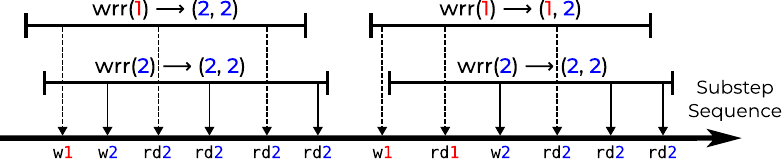}
    \caption{Two valid substep linearizations of concurrent $wrr$ operations. Each operation $wrr(x) \to (y, z)$ is split into 3 substeps which write $x$ ($\mathtt{w}x$), read $y$ ($\mathtt{rd}y$), and read $z$ ($\mathtt{rd}z$) respectively. }
    \label{fig:wrr-intervals}
\end{figure}

To correct this, we introduce \emph{substep linearizability}, a compositionality-preserving relaxation of linearizability's atomicity requirement which we show has power equal to interval linearizability.
Under substep linearizability, each of a concurrent object's operations may be split into an arbitrary non-predetermined number of \emph{substeps}, which may interleave freely with other operations' substeps if the two operations are concurrent.
A substep linearizable object that always uses one substep is a linearizable object.
Crucially, however, a substep-sequential specification may be expressed by taking a sequential specification and dividing it into one or more atomic steps, whose semantics may be specified as directly as when writing a fully sequential specification.
That is, the operations $wr(x)$ and $wrr(x)$ may be specified as we describe them, as two and three atomic steps respectively, while preserving the correctness and compositionality properties we expect of linearizability semantics.
Figure~\ref{fig:wrr-intervals} illustrates two possible substep interleavings when invoking $wrr(1)$ and $wrr(2)$ in parallel.

In this paper, we first introduce X-CCorrectness (``Concurrent Correctness for X''), our formal framework for reasoning about linearizability variants in Section \ref{sec:lin-definitions}.
Section \ref{sec:classic-lin} and beyond define simple linearizability, substep linearizability, and interval linearizability.
We illustrate in Appendix \ref{sec:additional-lins} that X-CCorrectness is suitable for defining other prior work, such as set linearizability~\cite{set-lin94} and weaker properties like sequential consistency and quiescent consistency~\cite{quiescent-consistency-2014,art-of-multiprocessor-programming-2008}.
For substep and interval-sequential linearizability, those subsections contain a detailed comparison of their sequential specification styles using $wr(x)$ and $wrr(x)$.
Section \ref{sec:equiv-compos} defines subsumption and composability properties for X-CCorrectness, and Section \ref{sec:interval-substep} uses those definitions to compare substep and interval linearizability.
Sections \ref{sec:exchanger}--\ref{sec:snapshot-isolation} are case studies with substep-sequential specifications of atomic exchangers~\cite{concurrency-aware-lin15}, a popular non-linearizable queue structure~\cite{cameron314_concurrentqueue}, and strong snapshot isolation for database transactions~\cite{strong-snapshot-2006}.

\section{Definitions of Linearizability}\label{sec:lin-definitions}

We first give a general definition that supports a variety of correctness conditions, including many common notations used in this paper.
We then apply these definitions to specific forms of linearizability: classical (as baseline), substep, and interval.
As part of our precise definitions of substep and interval linearizability, we provide detailed comparisons of their sequential specifications in terms of $wr$ and $wrr$ from Section \ref{sec:intro}.

\subsection{Generalized Concurrent Correctness}\label{sec:x-ccorrectness}

We generalize correctness of concurrent objects and histories as X-CCorrectness.
The condition is shown in Definition~\ref{def:x-lin}, constraining a history $H$ of an object $Obj$, parametrized over total functions $f$ and predicates $P$. If X-CCorrectness satisfies composability (defined in section~\ref{sec:equiv-compos}), then we instead refer to it as X-Linearizability.
Notation and context needed to understand the definition follow.

\begin{definition}[X-CCorrect]\label{def:x-lin}
    A history $H$ is X-CCorrect under $P$ iff $\exists S \in f(H): P(S)$. An object $Obj$ is X-CCorrect if all histories $H\in\H_{Obj}$ are X-CCorrect.
\end{definition}

\textbf{Histories.}
Histories may be either \emph{complete} or \emph{incomplete}.
Incomplete histories contain operations which have an invocation but not a response.
This can occur if a thread crashes in the middle of an operation, or if an ongoing execution is being tested.

A history $H$ is comprised of an ordered sequence of $m$ invocations $i = \langle name, arg, pid\rangle$ and $n \leq m$ responses $r = \langle name, ret, pid\rangle$.
Here, $name$ identifies the operation invoked/responding, $arg$ is an argument, $ret$ is a return value, $pid$ identifies the invoking thread/process.
For convenience, we assume that all invocations and responses have a unique index in $H$.
We can construct this from any $H$ by tagging each $arg$ and $ret$ with that index.
Each response is matched to a unique invocation and $inv(r)$ returns said invocation.
$res(i)$ returns the response matched to $i$ or $\bot$ if none exist.
Let all histories $H \in \H$ satisfy the following criteria, where $<$ totally orders invocations and responses by $idx$:
\begin{itemize}
    \item $\forall j\in 1..m_r:inv(r_j)<r_j$
    \item For all $i = \langle n_i, A, p_i\rangle$ and $r=\langle n_r, R, p_r\rangle$, $inv(r) = i \implies n_i = n_r\land p_i = p_r \land \nexists i' : i = \langle n_i', A', p_i' \rangle \land i < i' < r$
\end{itemize}

In other words, the response to an invocation always occurs after it in the history, they share a $name$ and $pid$, and no operation with the same $pid$ can be invoked between them.

\textbf{Operations.}
For every invocation $i_k$ in a history $H$, the operation $o_k = \langle i_k, res(i_k) \rangle$.
That is, the operation identifies either a matching pair of invocation and response, or if $H$ is incomplete, an invocation that does not have a response where $res(i_k) = \bot$.
We write an operation's invocation as $inv(o_k) = i_k$, and the response as $res(o_k) = res(i_k)$.
Operations are indexed by their invocation, so $idx(o_k) = idx(inv(o_k))$.
For any history $H$ and all $i_k$ in $H$, $ops(H)$ is the set of all operations $\langle i_k, res(i_k) \rangle$.

\textbf{Happens-Before.}
Operations are partially ordered by the happens-before relation $<_H$, such that $o_a <_H o_b$ iff $res(o_a) \neq \bot$ and $res(o_a) < inv(o_b)$ in $H$.

\textbf{Objects.}
Each operation $o_k$ corresponds to some action on some object $obj(o_k)$.
The corresponding invocation and response $o_k = \langle i_k, r_k \rangle$ necessarily refer to the same object, therefore $obj(i_k) = obj(r_k) = obj(o_k)$.
Each object $Obj$ has a set of \emph{well-formed} concurrent histories $\H_{Obj} \subseteq \H$.

\textbf{Restriction.}
When reasoning about compositionality, we often restrict histories or collections of operations to a specific object $A$.
That is, $H|_A$ contains all $i_k, r_k$ in $H$ such that $obj(i_k) = obj(r_k) = A$.
We use the same notation for any collection whose elements are associated with objects in this sense.

\textbf{Parameters.}
X-CCorrectness is parametrized over:
\begin{enumerate}
    \item A total function $f : \H \rightarrow \S$ which maps a history to a set of ordered sequences of events $\{ S_1, S_2 , ..., S_p \}$, where each $S_i = e_1, ..., e_k$.
    The meaning of $e$ is defined by the type of linearizability.
    We assume that each event $e$ has an associated object $obj(e)$ from $H$.
    Similar to histories, we define $S|_A$ as sequences restricted to events associated with object $A$.
    
    \item A predicate $P(S)$ that identifies whether $S$ is a valid sequence of events. 
    For a sequence $S$ that involves two objects $A$ and $B$, we assume that $P$ is defined independently for each object.
    That is, $P(S) = P(S|_A) \land P(S|_B)$.
    We use $S_1\leq_p S_2$ to denote $S_1$ being a prefix of $S_2$.
    We require that $P$ is prefix-preserving in that if $P(S)$ holds then $P(S')$ holds for all $S'\leq_pS$. We place no restriction on $P$, other than it must be a predicate on valid sequences. Let $Pred(X)$ be the set of all valid predicates for X-CCorrectness.
\end{enumerate}

\textbf{History Completions.}
Incomplete histories, where there is an $i$ in $H$ such that $res(i) = \bot$, can be difficult to reason about, so we define their completions.
This definition mirrors those in~\cite{interval-lin18,classic-lin90}.
While history completions are known to disallow some specifications~\cite{compositional-lin23}, they are often useful.

We define the set of completions $C(H)$ of a history $H$ such that:
\begin{itemize}
    \item $C(H)$ contains all invocation-response pairs $\langle i, r \rangle$ in the same relative order as in $H$.
    \item For all invocations $i$ in $H$ where $res(i) = \bot$, either:
    \begin{itemize}
        \item[$\lor$] $i$ is not in $C(H)$.
        \item[$\lor$] $i$ is in $C(H)$ and there is $r \in C(H)$ such that $res(i) = r$ in $C(H)$.
        $r$ may have any value.
    \end{itemize}
    \item $C(H)$ only contains invocations and responses defined in the previous two conditions.
\end{itemize}

To define $P$ over only complete histories, we can define the function $f(H) = \bigcup_{H' \in C(H)} f(H')$.
Unless otherwise stated, we assume complete histories in this paper.

\subsection{Classic Linearizability}\label{sec:classic-lin}

What we call classic linearizability, more commonly referred to as simply \emph{linearizability}, was first introduced by Herlihy and Wing~\cite{classic-lin90}.
In this case, the set of events is exactly the set of operations $ops(H)$.
The function $f(H)$ produces all permutations of $n = |ops(H)|$ operations $o_1, ..., o_n \in ops(H)$ consistent with happens-before in $H$.
That is:
\begin{equation*}
    o_1, ..., o_n \in f(H) \text{ iff } \forall x, y \in 1..n : x < y \implies \lnot (o_x >_H o_y)
\end{equation*}

Then, $P(o_1, ..., o_n)$ is a sequential specification, specifying valid sequences of atomic operations applied to the underlying object.
$P$ is often modeled as a state machine whose transitions are labeled by operations, such that it accepts a subset of sequences $o_1, ..., o_n$.


\subsection{Substep Linearizability}\label{sec:substep-lin}

Substep linearizability is a contribution of this paper.
Like classical linearizability, we consider permutations of the operations implied by a history $H$ that are consistent with happens-before.
Substep linearizability differs by allowing operations to split into multiple atomic parts we call \emph{substeps}.
The last substep taken by an operation may be marked \emph{final}, meaning that extensions of that linearization cannot show the operation taking further substeps.

For any history $H$, we define a substep as $s = \langle o, m \rangle$ where $o \in ops(H)$ and $m \in \{\text{true}, \text{false}\}$ is a final marker.
A substep's operation is written $op(s)$, and its finality is written as the predicate $fin(s)$.
We define $f(H)$ as containing all sequences of substeps $s_1, ..., s_n$ such that:
\begin{align*}
    s_1, ..., s_n \in f(H) \text{ iff } & ((\forall o \in ops(H) : res(o) = \bot \lor \exists x \in 1..n : op(s_x) = o) \\
        & \land (\forall x, y \in 1..n : x < y \implies \lnot (op(s_x) >_H op(s_y))) \\
        & \land \forall x \in 1..n : \\
            & \quad ((fin(s_x) \implies \nexists y \in (x+1)..n : op(s_x) = op(s_y)) \\
            & \quad \land ((\lnot fin(s_x) \land res(op(s_x)) \neq \bot) \\
            & \quad \quad \implies \exists y \in (x + 1)..n : op(s_x) = op(s_y) \land fin(s_y))))
\end{align*}

That is, all completed operations in $ops(H)$ occur at least once in $op(s_1),op(s_2),...,op(s_n)$ (and vice versa), and no operations may disagree with partial history order $<_H$. The sequence $op(s_1),op(s_2),...,op(s_n)$ may contain duplicate operations.
For all substeps involving the same operation, if the operation has a response then its last substep must be marked \emph{final}, meaning extensions of $s_1, ..., s_n$ may not add substeps to that operation.
Incomplete operations do not need a final substep.
The predicate $P(s_1, ..., s_n)$ is called a substep-sequential specification, which defines valid substep interleavings.

Note that per this definition, substep linearizability does not require complete histories.
An incomplete operation may take substeps, meaning that the limitations of history completion do not apply.
However, defining $P$ where response values may be missing is more complex, so we restrict our treatment in this paper to complete histories.
 
\textbf{Annotated Substeps.}
We optionally annotate an operation's substeps with indices to make writing substep-sequential specifications easier.
For any substep sequence $s_1, ..., s_n \in f(H)$, we can count repetitions of each operation $o \in ops(H)$ and assign each one a sequential index $i \in \mathbb{N}$.

Indices give us the option to specify $P$ as $P(e_1, ..., e_n)$, in terms of annotated substeps $e_k$ for $k \in 1..n$, rather than the underlying sequence of substeps.
The annotated substep $e_k$ for some underlying operation $s_k$, given its assigned index $i$, has the form $e_k = \langle op(s_k), fin(i) \rangle$ if $fin(s_k)$, and $e_k = \langle op(s_k), step(i) \rangle$ otherwise.
The advantage of specifying $P$ in this way is that indexing steps is very common in concrete specifications of $P$.
Especially when describing $P$ as a state machine, having access to this information is much easier than extending the state machine to calculate it.
For convenience, we define properties of the underlying operation on $e_k$ and $s_k$ respectively.
For example, assume $res(e_k) = res(s_k) = res(op(s_k))$.

We model $P$ as a possibly non-deterministic state machine with arbitrarily labeled states $q$, where transitions are determined by substeps $e$.
Thus, $P(e_1, ..., e_n)$ becomes the assertion that our state machine accepts the sequence of transitions $e_1, ..., e_n$ starting at an initial state $q_0$.
Acceptable transitions are defined by the relation $\delta(q, e, q')$, which may be written in terms of all information available in $q$ and $e$: any information with which $q$ is labeled, the substep index and finality, and the invocation and response of the underlying operation $o_k$.

See the following sections for simple examples of defining $P$, as well as a proof that substep linearizability is non-blocking in Section~\ref{sec:substep-non-blocking}.
For larger scale examples, see Sections \ref{sec:exchanger}--\ref{sec:snapshot-isolation}.




\subsubsection{Example: Write-Read Substep-Sequentially}

We model $wr$ substep-sequentially as a state machine whose states $q$ are labeled with a current register value $q.r$.
Each operation $o$ is considered as two substeps, $val_i(o)$ is the value passed to the invocation of $o$ ($x$ in the original description), and $val_r(o)$ is the response value.
The transition relation $\delta(q, e, q')$ can be defined as:
\begin{itemize}
    \item If $e = \langle o, step(1) \rangle$, $q'.r = val_i(o)$ \hspace*{\fill} \fbox{\scriptsize Write $r \gets x$}
    \item If $e = \langle o, fin(2) \rangle$, $val_r(o) = q.r \land q'.r = q.r$ \hspace*{\fill} \fbox{\scriptsize Read and respond with $r$'s value}
\end{itemize}

The specification directly matches the original description, as annotated on the right.

\subsubsection{Example: Write-Read-Read Substep-Sequentially}

We extend the substep-sequential specification of $wr$ to $wrr$ by adding one substep to it.
Each state $q$ is still labeled with the current register value $q.r$.
The operation $o$ still has the invocation value $val_i(o)$.
We write operation's two result values as $\mathit{val}_{r1}(o)$ and $\mathit{val}_{r2}(o)$, respectively.
With this, the transition relation $\delta(q, e, q')$ can be defined as:
\begin{itemize}
    \item If $e = \langle o, step(1) \rangle$, $q'.r = val_i(o)$ \hspace*{\fill} \fbox{\scriptsize Write $r \gets x$}
    \item If $e = \langle o, step(2) \rangle$, $\mathit{val}_{r1}(o) = q.r \land q'.r = q.r$ \hspace*{\fill} \fbox{\scriptsize Read $r_1 \gets v$}
    \item If $e = \langle o, fin(3) \rangle$, $\mathit{val}_{r2}(o) = q.r \land q'.r = q.r$ \hspace*{\fill} \fbox{\scriptsize Read $r_2 \gets v$ and respond with $(r_1, r_2)$}
\end{itemize}

Again, this specification can be read as a one to one transcription of our operational description, as annotated on the right.
Producing a behavioral specification like this takes no more effort than writing an imperative program, which gives substep-sequential specifications a similar level of accessibility to fully sequential specifications.

\subsection{Substep Linearizability is Non-Blocking}\label{sec:substep-non-blocking}

Ideally, we would like a definition of linearizability to be \emph{non-blocking} in that no total operation has to wait for a different operation in order to progress.
We say an operation name $N$ is \emph{total} with respect to a predicate $P$ if for all $i=\langle N, arg, pid\rangle$ and $S$ such that $P(S)$ holds, there exists a response $r=\langle N, ret, pid\rangle$ with $o=\langle i, r\rangle$, such that $P(S\cdot \langle o, step(1)\rangle\cdot \langle o, step(2)\rangle\cdot ...\cdot \langle o, fin(k)\rangle)$ holds for some $k\geq 1$.
In other words, there is some response to $i$ such that a finite sequence of substeps of operation $o$ that can be appended to $S$ while maintaining its validity with respect to $P$.

We show this is the case for substep linearizability.
Note that we only consider predicates that hold uniquely for sequences of completed operations, since if the object itself is blocking, there is no point in the definition being non-blocking.
Let $H\cdot r$ be the result of appending a response $r$ to the end of $H$ and $S\cdot e$ be the result of appending a substep to the end of a sequence of substeps.

Theorem~\ref{thm:non-blocking} implies that substep linearizability is non-blocking:

\begin{theorem}\label{thm:non-blocking}
    Let $H$ be a history containing an unmatched invocation $i=\langle N, arg, pid\rangle$ of a total operation name $N$ where $H$ is substep-linearizable with respect to a predicate $P\in Pred(substep)$. Then there exists return value $ret$ such that $H\cdot\langle N, ret, pid \rangle$ where $\langle N, ret, pid \rangle = res(i)$ is in the set $\H$ and is substep-linearizable with respect to $P$.
\end{theorem}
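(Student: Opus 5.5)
Since $H$ is substep-linearizable with respect to $P$, Definition~\ref{def:x-lin} gives a witness sequence $S = s_1, \ldots, s_n \in f(H)$ with $P(S)$. We extend this witness rather than build a new linearization. The substeps of the pending operation $o=\langle i,\bot\rangle$ already in $S$ are all non-final, because the definition of $f(H)$ imposes no finality requirement on operations with $res(o)=\bot$. Our plan is to append to $S$ a finite run of further substeps of $o$, ending with a final one, and to choose the return value via totality of $N$. Then we check that the extended sequence lies in $f(H\cdot r)$ for $r = \langle N, ret, pid\rangle$, and that $H\cdot r \in \H$.

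\textbf{Step 1 (choose the response).} We apply the totality of $N$ to the valid sequence $S$. This yields $ret$, the operation $o' = \langle i, r\rangle$, and some $k\ge 1$ such that $P(S\cdot\langle o',step(1)\rangle\cdots\langle o',fin(k)\rangle)$ holds. The main subtlety arises here. Totality is phrased for a fresh invocation, indexing the appended substeps from $1$. If $S$ already contains $j$ substeps of $o$, the annotated indices of the appended substeps would instead be $j+1,\ldots$, and the operation object $o$ in $S$ changes to $o'$ once the response is attached.

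We handle the second point by viewing the substeps of $o$ in $S$ as substeps of $o'$: the invocation, and hence the identity and index, is unchanged. For the first point, the cleanest route is to first apply the history-completion-style observation that the earlier substeps of $o$ can be dropped. We would prefer to avoid this, since deleting them need not preserve $P$. Instead we will read the totality clause as applying to the sequence $S$ with respect to the pending invocation, so that the indexing continues from the existing substeps. If that reading is not granted, we fall back to the paper's stated restriction to predicates defined on completed operations. In that setting the pending operation's substeps in $S$ can be taken to be absent, i.e.\ $S$ contains no substep of $o$, and totality applies verbatim. We expect this reconciliation to be the main obstacle, and it will be stated explicitly as the interpretation used.

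\textbf{Step 2 (well-formedness).} We check that $H\cdot r\in\H$. The response follows its invocation, and it shares $name=N$ and $pid$ with it. No invocation with the same $pid$ occurs between $i$ and $r$: since $i$ is unmatched in the well-formed $H$, any later invocation by $pid$ in $H$ would violate the per-process sequentiality condition, so none exists. The response $r$ is matched to $i$, and its index is fresh.

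\textbf{Step 3 (membership in $f(H\cdot r)$).} Let $S' = S\cdot\langle o',step(j+1)\rangle\cdots\langle o',fin(\cdot)\rangle$. We verify the three clauses of $f$. \emph{Coverage:} every completed operation of $H\cdot r$ is either complete in $H$, and so appears in $S$, or is $o'$, which appears in the appended part. \emph{Happens-before:} appending $r$ at the end of the history creates no new pairs $o_a <_{H\cdot r} o_b$ other than those with $o_a = o'$. Since $r$ is the last event, no $o_b$ has $inv(o_b) > r$, so no new constraints arise and the appended tail is consistent. \emph{Finality:} the last substep of $o'$ is final and nothing follows it. The other operations' finality conditions are inherited from $S$, with only substeps of $o'$ appended after them. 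Combined with $P(S')$ from Step 1, this shows $H\cdot r$ is substep-linearizable with respect to $P$.
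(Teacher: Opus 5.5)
Your plan breaks at its first claim. That $f(H)$ imposes no finality \emph{requirement} on an operation with $res(o)=\bot$ does not mean the substeps of the pending $o$ in $S$ are all non-final. The clause $fin(s_x)\implies\nexists y\in(x+1)..n : op(s_x)=op(s_y)$ applies to every operation, pending or not, and nothing stops $S$ from already closing $o$ with a final substep. In that case your $S'$ places substeps of $o'$ after a final substep of the same operation, which violates exactly this clause. So $S'\notin f(H\cdot r)$, and the finality check in Step~3 is wrong. Totality, even under your continued-indexing reading, only delivers $P(S')$, not membership in $f$. What is missing is the paper's case split on whether $S$ contains a substep of $o$. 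If it does, the paper does not extend at all: it takes $S$ itself, with $o$ now carrying the response, as the witness in $f(H\cdot r)$ for a suitable $ret$. Only if $S$ contains no substep of $o$ does it append the run supplied by totality. That second branch is exactly your fallback, and your Step~3 verifications are correct there (coverage, no new happens-before constraints since $r$ is last, finality of the appended tail).

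Your extension idea does cover a subcase the paper's proof glosses over. If $S$ contains only \emph{non-final} substeps of $o$, then $S$ alone is not in $f(H\cdot r)$, because a non-final substep of a completed operation must be followed by a final one, so some extension is needed. But your treatment of that subcase rests on two things the stated definitions do not give you. One is reading totality as continuing indices after the existing substeps. The other is relabelling earlier substeps of $\langle i,\bot\rangle$ as substeps of $\langle i,r\rangle$, which need not preserve $P$, since $\delta$ may read the response. The clean fix is to state the paper's restriction (predicates holding only for sequences of completed operations, read so that $S$ cannot contain a substep of the pending $\langle i,\bot\rangle$) as an explicit hypothesis rather than a fallback. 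Then your fallback alone is a complete proof matching the paper's second case, and neither problematic configuration arises.

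A minor point on Step~2: the stated well-formedness condition only constrains invocations between a \emph{matched} invocation and its response. Ruling out a later same-$pid$ invocation after the pending $i$ therefore requires the intended per-process-sequential reading of $\mathcal{H}$, not the literal condition.
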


\begin{proof}
    Since $H$ is substep-linearizable, there exists an $S\in f(H)$ such that $P(S)$ holds.
    If $S$ contains a substep of an operation $o$ such that $inv(o_j)=i_j$, then we must have $S\in f(H\cdot \langle N, ret, pid \rangle)$ for some $ret$ with $\langle N, ret, pid \rangle = res(i_j)$ by definition of substep-linearizability, so the Theorem holds.
    If $S$ does not contain a substep of such an operation $o$, then since $N$ is total, there exist $N$, $ret$, $pid$, and a sequence of substeps such that for $o=\langle i, \langle N, ret, pid \rangle\rangle$, $P(S\cdot \langle o, step(1)\rangle\cdot \langle o, step(2)\rangle\cdot ...\cdot \langle o, fin(k)\rangle)$ holds.
    Furthermore, $S\cdot \langle o, step(1)\rangle\cdot \langle o, step(2)\rangle\cdot ...\cdot \langle o, fin(k)\rangle\in f(H\cdot \langle N, ret, pid \rangle)$, so the Theorem holds in this case as well.
\end{proof}

\subsection{Interval Linearizability}\label{sec:interval-lin}

Interval linearizability~\cite{interval-lin18} generalizes linearizability by fusing a set of invocations $I$ and a set of responses $R$ into one step, or \emph{interval} $(I, R)$.
Note that unlike substep linearizability, $f(H)$ is only well defined on complete histories.
The resulting $f(H)$ contains sequences of intervals such that:
\begin{align*}
    I_1, R_1, ..., I_n, R_n \in f(H) \text{ iff } & (\forall \, \langle i, r \rangle \in ops(H) : \\
            & \quad (\exists x, y \in 1..n : i \in I_x \land r \in R_y \\
            & \quad \land \forall x \in 1..n : r \notin I_x \land i \notin R_x) \\
        & \land \forall x, y \in 1..n : \\
            & \quad ( I_x \cap I_y = R_x \cap R_y = \emptyset \\
            & \quad \land \forall i \in I_x, r \in R_y : op(i) = op(r) \implies x \leq y \\
            & \quad \land \forall r \in R_x, i \in I_y : r < i \implies x < y))
\end{align*}

The definition above can be explained via three rules:
\begin{itemize}
    \item Every invocation and response in $H$ must be assigned to exactly one set $I$ and $R$ respectively, where sets $I$ contain only invocations, and sets $R$ contain only responses.
    \item An operation's invocation must occur in an interval no later than its response.
    \item A response that happens-before an invocation must occur in an earlier interval than that invocation.
\end{itemize}

The predicate $P(I_1, R_1, ..., I_n, R_n)$ is called an interval-sequential specification.
It is modeled as a state machine with arbitrarily labeled states $q$, whose transitions are determined by intervals $I, R$.
Acceptable transitions are defined by the relation $\delta(q, I, R, q')$.
More precisely, the sequence $I_1, R_1, ..., I_n, R_n$ is accepted by $P$ if and only if there exists $q_0, q_1, ..., q_n$ such that $q_0$ is a valid initial state and for all $i \in \{1,2,...,n\}$, $(q_{i-1},I_i, R_i,q_i) \in \delta$.

As we comment in Section \ref{sec:intro}, $\delta$ can be challenging to define correctly.
It can work well for operations that combine operations into fewer steps -- see the atomic exchanger case study in Section~\ref{sec:exchanger}.
However, it is challenging to express even simple interval-sequential specifications outside of that situation.
We show this by giving interval-sequential specifications of $wr$ and $wrr$ in the following sections.


\subsubsection{Example: Write-Read Interval-Sequentially}

To give an interval-sequential specification for $wr$ as defined in Section \ref{sec:intro}, we need to map each invocation's two atomic steps onto one interval, interpreting multiple invocations and responses at once (the sets $I$ and $R$ respectively).

First consider two concurrent invocations: $wr(1) \to x$ which writes $1$ and reads $x$, and $wr(2) \to y$ which writes $2$ and reads $y$.
Our interval-sequential specification must specify what values of $x$ and $y$ are possible, so we examine the valid permutations of the operations' 4 atomic steps.
We write the step where $wr(1)$ writes its value as $wr(1)_\text{writes}$, and the step where it reads a value $p$ as $wr(1)_{\text{reads }p}$.

If $x = 1$ and $y = 2$, then the invocations are sequential and not concurrent: this is the effect of $wr(1)_\text{writes}$, $wr(1)_{\text{reads }1}$, $wr(2)_\text{writes}$, $wr(1)_{\text{reads }2}$.
No concurrent execution can have this outcome, since a concurrent execution would start $wr(2)$ earlier, preventing $wr(1)$ from reading its own value.
Our interval-sequential specification does not need to specify this, since two sequential intervals achieve the same result.

In a concurrent execution, sequences where $x = 2$ have the form $wr(1)_\text{writes}$, $wr(2)_\text{writes}$, ..., $wr(1)_{\text{reads } 2}$, ...
This case requires that $y = 2$, since $wr(2)_\text{reads}$ can only occur on either side of $wr(1)_\text{reads}$, nor before $wr(1)_\text{writes}$ since that would not be sequentially consistent.
By swapping $1$ and $2$, the same argument means $y = 1 \implies x = 1$.

Generalizing, writes by concurrent executions of $wr$ can never interleave with reads.
Every concurrent operation must start by writing its own value.
To have $wr(p)_\text{reads}$, we must also have an earlier instance of $wr(p)_\text{writes}$.
For a read to be followed by a write, that write must belong to a new non-concurrent invocation of $wr$.
That is, for an operation to not read its own write, the only concurrent interleaving looks like $wr(v_1)_\text{writes}$, $wr(v_n)_\text{writes}$, ..., $wr(v_1)_{\text{reads } v_n}$, ..., $wr(v_m)_{\text{reads } v_n}$, where $n$ and $m$ are the sizes of $I$ and $R$ respectively.

Given this insight, our interval-sequential specification of $wr$ is an automaton with one state $q$.
Each invocation $i$ has an argument $\mathit{val}(i)$ ($x$ in the original description), and every response $r \in R$ has a response value $\mathit{val}(r)$.
That is, the invocation looks like $wr(\mathit{val}(i)) \to \mathit{val}(r)$.
In this case, the transition relation $\delta(q, I, R, q)$ can be defined iff there exists $i \in I$ such that for all $r \in R$, $val(r) = val(i)$.

This specification treats invocations as write operations and responses are read operations.
For any interval $(q, I, R, q)$, every response in $R$ must read the value written by some invocation in the preceding set of invocations, and every response in $R$ must read the same value.
It is not satisfying, however.
As we show above, it takes nontrivial reasoning to claim this specification truly represents the intended behavior of $wr$.
Observing this property during analysis given a clear statement of intended behavior makes sense.
Requiring the derivation of such a property in order to write an initial specification does not.

\subsubsection{Example: Write-Read-Read Interval-Sequentially}

Unlike $wr$, to divide $wrr$ into sets of invocations and responses we must explicitly model the possible interleavings of the three component operations.

Consider a pair of concurrent operations $wrr(1)$ and $wrr(2)$.
Aside from completing each operation sequentially, there are four valid outcomes.
Figure \ref{fig:wrr-intervals} illustrates the first two.
If $wrr(1)$ writes first, we get these interleavings: $wrr(1)_\text{writes}$, $wrr(2)_\text{writes}$, $wrr(*)_{\text{reads }2}$ ($4\times$) -- meaning $wrr(1)$ and $wrr(2)$ both return $(2, 2)$; and $wrr(1)_\texttt{writes}$, $wrr(1)_{\text{reads }1}$, $wrr(2)_\text{writes}$, $wrr(*)_\text{reads 2}$ ($3\times$) -- meaning $wrr(1)$ returns $(1, 2)$ and $wrr(2)$ returns $(2, 2)$.
If $wrr(2)$ writes first, we get the same pattern but with $1$ and $2$ swapped.
The result $(1, 2)$ being possible proves our property for $wr$ does not apply to $wrr$.
Instead, we must directly express possible sequences of interleavings.

More formally, an interval-sequential specification of $wrr$ is a non-deterministic state machine with states $q$ labeled with the set of pending operations $q.pend$ and $q.obs$ a function from any operation in $q.pend$ to either a value or $\bot$.
Invocations $i \in I$ have values $\mathit{val}(i)$, and responses $r \in R$ have two values each, $\mathit{val}_1(r)$ and $\mathit{val}_2(r)$.
For $I$ and $R$, $ops(I)$ and $ops(R)$ are the sets of corresponding operations respectively.
The transition relation $\delta$ can be defined such that $(q, I, R, q') \in \delta$ iff there exists a permutation $s$ of all elements of $I$, such that:
\begin{itemize}
    \item Let $pos(i)$ be the position of $i \in I$ in the sequence $s$.
    \item Let $obs(x)$ be a function from $x \in q.pend \cup ops(I)$ to a value or $\bot$ such that:
    \begin{itemize}
        \item Where $q.obs(x) \neq \bot$, $obs(x) = q.obs(x)$
        \item Where $q.obs(x) = \bot$, non-deterministically either $obs(x) = \bot$, or $obs(x) = val(i)$ where $i \in I$ such that if $inv(x) \in I$ then $pos(inv(x)) \leq pos(i)$.
    \end{itemize}
    \item For all $r \in R$:
        \begin{itemize}
            \item[$\land$] $obs(op(r)) \neq \bot$ and $\mathit{val}_1(r) = obs(op(r))$
            \item[$\land$] $\mathit{val}_2(r) = val(i)$ where $i$ is the last invocation in $s$
        \end{itemize}
    \item $q'.pend = q.pend \cup ops(I) \setminus ops(R)$
    \item $q'.obs(x) = obs(x)$, where $x \in q'.pend$
\end{itemize}

Intuitively, the resulting state machine non-deterministically tracks the middle step of each operation using $obs(x)$.
During any substep in which an operation is pending, some ``middle'' value may optionally be read into $obs(x)$, meaning the middle step occurs at any point between invocation and response of its corresponding operation.
We implicitly discard cases where $obs(x)$ stays $\bot$, since the response cannot be $\bot$.
Following the selected sequence $s$, $obs(x)$ observes only invocations on or after $inv(x)$.

While this specification works, it has significant incidental complexity.
For example, a previous draft allowed $obs(x)$ to capture invocations before $inv(x)$, which was missed for some time by multiple co-authors.
Furthermore, the specification starts to resemble manually keeping track of all possible substeps, which is the use case for substep linearizability.

\section{Defining Subsumption and Composability}\label{sec:equiv-compos}

In this section, we establish two definitions important for creating definitions of linearizability: Subsumption and composability.
Intuitively, a definition of CCorrectness subsumes another if it can define an equivalent predicate for all predicates of the other definition.
To fit with the common meaning of linearizability, we want our definitions to be composable.
That is, for any history composed of multiple objects, if the histories restricted to each individual object are X-CCorrect under a predicate, then the entire history is X-CCorrect under the same predicates and vice-versa.

First, Theorem~\ref{thm:substep-composable} states that substep linearizability is composable.
Theorem~\ref{lem:compose-conditions} generalizes conditions for composability of X-CCorrectness, and Theorem~\ref{thm:equiv-compose} states that composability follows from being subsumed by a composable definition.

For two definitions of CCorrectness, X-CCorrectness and Y-CCorrectness, we say that Y-CCorrectness \emph{subsumes} X-CCorrectness if for all $P_x\in Pred(X)$, there exists a $P_y\in Pred(Y)$ such that $H$ is X-CCorrect with respect to $P_x$ if and only if $H$ is Y-CCorrect with respect to $P_y$.
We say that X-CCorrectness and Y-CCorrectness are \emph{equivalent} if each subsumes the other.

\begin{definition}[Subsumes]
    Y-CCorrectness \emph{subsumes} X-CCorrectness if $\forall P_X\in Pred(X):\exists P_Y\in Pred(Y):\forall H\in\H:(\exists S\in f_X(H):P_X(S))\iff (\exists S'\in f_Y(H):P_Y(s))$.
\end{definition}

\begin{definition}[Equivalence]
    X-CCorrectness is \emph{equivalent} to Y-CCorrectness if X-CCorrectness subsumes Y-CCorrectness and Y-CCorrectness subsumes X-CCorrectness.
\end{definition}

As a reminder, we require $P(S)=P(S|_A)\,\land\,P(S|_B)$ for a sequence composed of operations on 2 objects, $A$ and $B$.
This can easily be extended to histories composed of more than 2 objects, but we only handle this case for simplicity.

\begin{definition}[Composability]\label{def:composability}
    A history $H$ is \emph{composable} under X-CCorrectness if $\forall P\in Pred(X):(((\exists S_A\in f(H|_A):P(S_A))\,\land\, (\exists S_B\in f(H|_B):P(S_B)))\iff \exists S\in f(H):P(S))$.
    X-CCorrectness is composable if all possible histories $H\in \H$ are composable.
\end{definition}

\begin{theorem}\label{thm:substep-composable}
    Substep linearizability is composable.
\end{theorem}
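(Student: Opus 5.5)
We prove the two directions of the biconditional in Definition~\ref{def:composability} separately, for an arbitrary $P\in Pred(substep)$ and an arbitrary history $H$ over objects $A$ and $B$. Throughout we use the standing assumption $P(S)=P(S|_A)\land P(S|_B)$.

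\textbf{($\Leftarrow$) Projection.} Suppose $S\in f(H)$ and $P(S)$. Take $S_A=S|_A$ and $S_B=S|_B$; then $P(S_A)$ and $P(S_B)$ hold by the decomposition of $P$. It remains to check $S|_A\in f(H|_A)$ against the conjuncts of the definition of $f$.
\begin{itemize}
\item \emph{Coverage.} Every completed operation of $H|_A$ is an operation of $H$ on $A$, so it has a substep in $S$, and that substep survives restriction.
\item \emph{Happens-before.} If $o\,<_{H|_A}\,o'$ then $o<_H o'$, since restriction preserves the relative order of invocations and responses. So any inversion in $S|_A$ would already be an inversion in $S$.
\item \emph{Finality.} Both finality conditions concern only substeps of a single operation. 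Restriction keeps or drops all substeps of an operation together and preserves their order, so these conditions transfer directly.
\end{itemize}

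\textbf{($\Rightarrow$) Merging.} Given $S_A\in f(H|_A)$ and $S_B\in f(H|_B)$ with $P(S_A)$ and $P(S_B)$, we build an interleaving $S$ of $S_A$ and $S_B$. Since $S$ is an interleaving, $S|_A=S_A$ and $S|_B=S_B$, so $P(S)$ follows from the decomposition assumption. Coverage and the finality conditions are again per-operation, so they are inherited from $S_A$ and $S_B$. The real work is the happens-before conjunct: no substep of $o'$ may precede a substep of $o$ when $o<_H o$'. For pairs on the same object this holds already inside $S_A$ or $S_B$; the difficulty is cross-object pairs. This is the main obstacle, and it is the analogue of the Herlihy--Wing locality argument.

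\textbf{Constructing the interleaving.} We build $S$ from a graph on substep occurrences.
\begin{itemize}
\item Put an edge between consecutive elements of $S_A$, and likewise of $S_B$.
\item Put an edge from every substep of $o$ to every substep of $o'$ whenever $o<_H o'$.
\end{itemize}
Any topological order of this graph is an interleaving satisfying the happens-before conjunct, so it suffices to show the graph is acyclic. Suppose a cycle exists and take one with the fewest cross edges.

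Each real-time edge $o<_H o'$ means $res(o)<inv(o')$ in $H$. Each chain edge within $S_A$ from a substep of $u$ to a later substep of $v$ gives $\lnot(v<_H u)$, since $S_A\in f(H|_A)$. Intervals on one object are therefore totally preorderable in a way compatible with $<_H$.

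Walking around the cycle, we alternate segments inside $S_A$ or $S_B$ with real-time edges. We aim to derive a chain of inequalities on positions in $H$ of the form $inv(\cdot)<res(\cdot)<inv(\cdot)<\dots$ that returns to its start, which is impossible because $<$ is a total order on $H$.

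The key lemma makes this work. If $u$ precedes $v$ in $S_A$ (some substep of $u$ before some substep of $v$), then $inv(u)<res(v)$. Otherwise we would have $res(v)<inv(u)$, i.e.\ $v<_H u$, contradicting the happens-before conjunct of $S_A$. Chaining this lemma with $res(o)<inv(o')$ along the cycle gives a strictly increasing sequence of event indices that closes on itself, which is the desired contradiction.

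One point needs care: operations split into many substeps, possibly non-contiguous. This is handled by noting that the key lemma holds for any pair of substeps of $u$ and $v$, so splitting does not affect the argument. Incomplete operations can be ignored since we assume complete histories.
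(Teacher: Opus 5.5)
Your proof is correct and takes the same approach as the paper. For one direction, restricting a valid $S$ gives valid $S|_A$ and $S|_B$; for the other, you interleave $S_A$ and $S_B$ after showing that happens-before combined with the per-object substep orders has no cycle, and your lemma ($u$ preceding $v$ in $S_A$ forces $inv(u)<res(v)$, chained with $res(o)<inv(o')$ around a putative cycle) supplies the detail the paper's sketch only asserts.
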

This theorem can be proven using the same proof method as for classical linearizability. Intuitively, if there exist an $S\in f(H)$ that satisfies $P$, then $S|_A$ and $S|_B$ both satisfy $P$ and are in $H|_A$ and $H|_B$, respectively since these sub-histories maintain the same happens-before ordering of $H$. Furthermore, given an $S_A\in f(H|_A):P(S_A)$ and $S_B\in f(H|_B):P(S_B)$, an $S\in f(H):P(S)$ can be constructed by interleaving $S_A$ and $S_B$ in such a way that does not violate happens-before. It can be shown that such an interleaving must exist by proving that the transitive closure between the happens-before ordering of operations in $H$ with the ordering of operations in $S_A$ and $S_B$ cannot form a cycle.

We show that a history $H$ being composable can be reduced to saying that:
(1) For each object $A$ and for all sequences $S$ generated from $f(H)$, there is some sequence $S_A$ generated from $f(H|_A)$ such that $S_A$ is a prefix of $S|_A$ 
(2) for each set of sequences generated from each object, there is some sequence $S$ generated from $f(H)$ such that $S|_A$ is a prefix of the sequence $S_A$ generated from each object $A$.
The proof follows from the from the definition of a predicate on a sequence composed on multiple objects.

\begin{theorem}\label{lem:compose-conditions}
    A history $H$ is composable under X-CCorrectness if and only if the following conditions hold:
    \begin{enumerate}
        \item $\forall S\in f(H):(\exists S_A\in f(H|_A): S_A\leq_pS|_A)\land(\exists S_B\in f(H|_B): S_B\leq_pS|_B)$
        \item $\forall S_A\in f(H|_A), S_B\in f(H|_B): \exists S\in f(H): S|_A\leq_pS_A\land S|_B\leq_pS_B$
    \end{enumerate}
\end{theorem}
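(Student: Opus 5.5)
The plan is to prove both directions by unfolding Definition~\ref{def:composability}. The main tools are the two structural assumptions on predicates: decomposition, $P(S) = P(S|_A)\land P(S|_B)$, and prefix-preservation, $P(S)\Rightarrow P(S')$ for all $S'\leq_p S$. Throughout we use the freedom to choose $P\in Pred(X)$ adversarially. We will use predicates that are \emph{prefix-closures} of a fixed set of sequences; these are prefix-preserving by construction, and we define them objectwise so that they decompose.

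\textbf{($\Leftarrow$).} Assume conditions 1 and 2 and fix $P$.

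For the direction ``$\exists S\in f(H):P(S)$ implies both restrictions are correct'': $P(S)$ gives $P(S|_A)$ and $P(S|_B)$ by decomposition. Condition 1 yields $S_A\in f(H|_A)$ with $S_A\leq_p S|_A$, so $P(S_A)$ holds by prefix-preservation, and symmetrically for $B$.

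Conversely, take $S_A\in f(H|_A)$ and $S_B\in f(H|_B)$ with $P(S_A)$ and $P(S_B)$. Condition 2 gives $S\in f(H)$ with $S|_A\leq_p S_A$ and $S|_B\leq_p S_B$. Prefix-preservation gives $P(S|_A)$ and $P(S|_B)$, and decomposition gives $P(S)$.

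\textbf{($\Rightarrow$).} Assume $H$ is composable and argue by contraposition, building a witnessing predicate in each case.

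If condition 1 fails, there is some $S\in f(H)$ such that, say, no $S_A\in f(H|_A)$ is a prefix of $S|_A$. Define $P$ objectwise: on $A$-sequences, $P$ holds exactly on the prefixes of $S|_A$; on $B$-sequences, exactly on the prefixes of $S|_B$. This $P$ is prefix-preserving and decomposes, and $P(S)$ holds, so the right-hand side of composability is true. However, every $S_A\in f(H|_A)$ fails $P$, so the left-hand side is false, contradicting composability.

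If condition 2 fails, pick $S_A$ and $S_B$ witnessing the failure. Let $P$ hold on $A$-sequences exactly on the prefixes of $S_A$, and on $B$-sequences exactly on the prefixes of $S_B$. The left-hand side holds, so composability supplies $S\in f(H)$ with $P(S)$. By decomposition, $S|_A\leq_p S_A$ and $S|_B\leq_p S_B$, which contradicts the choice of $S_A$ and $S_B$.

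\textbf{Main obstacle.} The delicate step is justifying that these prefix-closure predicates lie in $Pred(X)$. This requires that $P$ can be defined independently per object, and that events carry objects so that restriction is meaningful. The paper places no restriction on $P$ beyond decomposition and prefix-preservation, so I would argue membership directly from that. I would also check that $P$ must accept the empty sequence, so that the construction stays consistent when one object has no events.
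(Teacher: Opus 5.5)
Your proposal is correct and matches the paper's proof essentially step for step. For $(\Leftarrow)$ you use condition 2 with prefix-preservation and decomposition for one implication and condition 1 for the other; for $(\Rightarrow)$ you refute each negated condition with the same prefix-closure predicates, $P(S') \equiv S'|_A\leq_p S|_A \land S'|_B\leq_p S|_B$ (respectively with $S_A$ and $S_B$ in place of $S|_A$ and $S|_B$). Your explicit checks that these predicates decompose, accept the empty sequence, and lie in $Pred(X)$ are points the paper leaves implicit.
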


\begin{proof}
    We first show conditions 1 and 2 imply $H$ is composable under X-CCorrectness.
    Given $P\in Pred(X)$ without loss of generality, we show the bidirectional implication in Definition~\ref{def:composability} holds for X.
    Assuming the left-hand side of Definition~\ref{def:composability} holds, then by condition 2 there exists an $S\in f(H)$ such that $S|_A\leq_pS_A\land S|_B\leq_pS_B$ holds.
    Therefore, since $P$ is prefix-preserving and $P(S_A)\land P(S_B)$ holds, we have that $P(S|_A)\,\land\,P(S|_B)$ holds which by definition is $P(S)$ so the right-hand side must be true.
    Assuming the right-hand side, then by condition 1 there exist $S_A$, $S_B$ such that $S_A\leq_pS|_A\land S_B\leq_pS|_B$.
    Thus, since $P(S)=P(S|_A)\land P(S|_B)$, the left-hand side must be true.

    Secondly, we show that the negation of either condition 1 or 2 implies $H$ is not composable under X-CCorrectness.
    \textbf{(a)} Negating condition 1, there must exist an $S \in f(H)$ such that $\forall S_A\in f(H|_A) : S_A \not\leq_p S|_A$ or $\forall S_B \in f(H|_B) : S_B \not\leq_p S|_B$.
    We choose $P \in Pred(X)$ where $P(S') = S'|_A \leq_p S|_A \land S'|_B \leq_p S|_B$.
    $P(S)$ holds by definition, but the definition of $P$ combined with the negation of condition 1 imply that $\forall S_A\in f(H|_A):\lnot P(S_A)$ or $\forall S_B\in f(H|_B):\lnot P(S_B)$. This contradicts definition~\ref{def:composability}.
    \textbf{(b)} Negating condition 2, there exist $S_A \in f(H|_A)$ and $S_B \in f(H|_B)$ such that $\forall S \in f(H) : S|_A \not\leq_p S_A \lor S|_B \not\leq_p S_B$.
    We choose $P \in Pred(X)$ where $P(S') = S'|_A \leq_p S_A \land S'|_B \leq_p S_B$.
    By definition of $P$, $P(S_A) \land P(S_B)$ must hold, but the negation of condition 2 implies that $\forall S\in f(H): \lnot P(S|_A)\lor \lnot P(S|_B)$.
    Therefore, $\forall S\in f(H):\lnot P(S)$ contradicts Definition~\ref{def:composability}.
    
\end{proof}

Theorem~\ref{thm:equiv-compose} shows that a definition of CCorrectness is composable if it is subsumed by a definition known to be composable.
Intuitively, for any predicate in the original definition, there must be an equivalent predicate in the composable definition.

\begin{theorem}\label{thm:equiv-compose}
    If Y-CCorrectness subsumes X-CCorrectness and Y-CCorrectness is composable, then X-CCorrectness is composable.
\end{theorem}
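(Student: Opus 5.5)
The plan is to chain the biconditional supplied by subsumption with the biconditional supplied by Y's composability (Definition~\ref{def:composability}). Neither Theorem~\ref{lem:compose-conditions} nor any structural property of $f_X$ or $f_Y$ is needed.

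Fix an arbitrary history $H\in\H$ and an arbitrary predicate $P_X\in Pred(X)$. We must show
\[
\bigl(\exists S_A\in f_X(H|_A):P_X(S_A)\bigr)\land\bigl(\exists S_B\in f_X(H|_B):P_X(S_B)\bigr)\iff \exists S\in f_X(H):P_X(S).
\]
Since Y-CCorrectness subsumes X-CCorrectness, we obtain a single $P_Y\in Pred(Y)$ with the following property: for every $H'\in\H$, $H'$ is X-CCorrect under $P_X$ if and only if $H'$ is Y-CCorrect under $P_Y$.

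The key observation concerns the order of quantifiers in the definition of subsumption, $\forall P_X\,\exists P_Y\,\forall H$. Because of this order, $P_Y$ depends only on $P_X$ and not on the history. So we may instantiate the equivalence three times with the same $P_Y$: at $H'=H|_A$, at $H'=H|_B$, and at $H'=H$. This requires $H|_A,H|_B\in\H$. That holds because restriction deletes whole invocation/response pairs for other objects and preserves their relative order, so the two well-formedness conditions on histories remain true.

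The chain of equivalences is then:
\begin{enumerate}
\item The left-hand side for $P_X$ holds if and only if $H|_A$ and $H|_B$ are both Y-CCorrect under $P_Y$. This uses the subsumption equivalence on each restriction.
\item Since Y-CCorrectness is composable and $P_Y\in Pred(Y)$, Definition~\ref{def:composability} applied to $H$ and $P_Y$ says this holds if and only if $\exists S'\in f_Y(H):P_Y(S')$.
\item By the subsumption equivalence at $H$, this holds if and only if $\exists S\in f_X(H):P_X(S)$.
\end{enumerate}
Since $H$ and $P_X$ were arbitrary, every history is composable under X-CCorrectness, so X-CCorrectness is composable.

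I do not expect a real obstacle. The only step needing care is the uniformity of $P_Y$ across $H|_A$, $H|_B$ and $H$. This matters because Definition~\ref{def:composability} applies one predicate to both restrictions and to the whole history. Had subsumption allowed $P_Y$ to depend on $H$, the argument would fail, so I would state this point explicitly.
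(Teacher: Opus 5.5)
Your proof is correct and follows the same argument as the paper, which phrases it as a contradiction. The paper splits non-composability of X into the two failing directions (a) and (b) of the biconditional and transports each to Y using a single history-independent $P'$; you instead chain the three equivalences directly, and your explicit points on the uniformity of $P_Y$ across $H|_A$, $H|_B$, $H$ and on $H|_A, H|_B \in \mathcal{H}$ are exactly what the paper uses implicitly.
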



\begin{proof}
    For a proof by contradiction, assume Y-CCorrectness is composable and subsumes X-CCorrectness, and X-CCorrectness is not composable.
    Since X-CCorrectness is not composable, there exist some history $H$, predicate $P\in Pred(X)$ and objects $A,B$ such that either \textbf{(a)} $\exists S_A\in f_X(H|_A):P(S_A)\,\land\, \exists S_B\in f_X(H|_B):P(S_B)$ but $\forall S\in f_X(H):\lnot P(S)$ or \textbf{(b)} $\exists S\in f_X(H):P(S)$ but $\forall S_A\in f_X(H|_A):\lnot P(S_A)\,\lor\, \forall S_B\in f_X(H|_B):\lnot P(S_B)$.
    By subsumption of X-CCorrectness, there must exist a $P'\in Pred(Y)$ such that $\forall H_0\in\H:(\exists S\in f_X(H_0):P(S))\iff (\exists S'\in f_Y(H):P(S'))$. 
    Then in case \textbf{(a)}, we have that $\exists S_A'\in f_Y(H|_A):P'(S_A')\,\land\, \exists S_B'\in f_Y(H|_B):P'(S_B')$ and $\forall S'\in f_Y(H):\lnot P'(S')$.
    In case \textbf{(b)}, we have that $\exists S'\in f_Y(H):P'(S')$ and $\forall S_A'\in f_Y(H|_A):\lnot P'(S_A')\,\lor\, \forall S_B'\in f_Y(H|_B):\lnot P'(S_B)$.
    Either case contradicts the definition of composability for Y-CCorrectness.
\end{proof}

\section{Relationship Between Substep and Interval Linearizability}\label{sec:interval-substep}

In this section, we first prove that substep linearizability subsumes interval linearizability.
This shows that any interval-sequential specification can be expressed using substeps.
We also show that interval linearizability subsumes substep linearizability under a restricted class of predicates that we describe in Definition~\ref{def:nb-preds}.

We begin by proving that substep linearizability subsumes interval linearizability.
In this direction, we can treat the invocations and responses in an interval linearization as pairs of substeps.
This means that any predicate defined for interval linearizability can be translated directly to a predicate defined for substep linearizability. Let $f_I(H)$ and $f_S(H)$ be the function $f(H)$ defined for interval and substep linearizability, respectively. 

\begin{theorem}\label{lem:sub-subsumes-int}
    Substep linearizability subsumes interval linearizability.
\end{theorem}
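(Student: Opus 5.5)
Given an interval predicate $P_I \in Pred(interval)$, we construct a substep predicate $P_S$ that simulates $P_I$, and show that for every complete history $H$, some $S \in f_I(H)$ satisfies $P_I$ if and only if some $S' \in f_S(H)$ satisfies $P_S$. The encoding treats each operation as exactly two substeps: a non-final substep $\langle o, step(1)\rangle$ standing for its invocation, and a final substep $\langle o, fin(2)\rangle$ standing for its response. An interval $(I_x, R_x)$ is encoded as a block: the first substeps of all operations with invocation in $I_x$, followed by the final substeps of all operations with response in $R_x$. To fix a canonical form, order each part of the block by the indices of the invocations and responses in $H$. A sequence $I_1,R_1,\dots,I_n,R_n$ thus maps to a substep sequence $\mathrm{enc}(I_1,R_1,\dots,I_n,R_n)$.

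\textbf{Defining $P_S$.} Let $P_S(e_1,\dots,e_k)$ hold iff the sequence is a prefix of $\mathrm{enc}(T)$ for some interval sequence $T$ with $P_I(T)$. Because this is defined as a prefix closure, $P_S$ is automatically prefix-preserving. It also respects the per-object requirement $P_S(S)=P_S(S|_A)\land P_S(S|_B)$, inheriting this from $P_I$, since encoding commutes with restriction to an object. If one prefers a state-machine presentation, $P_S$ can buffer the current block and fire $\delta_I$ whenever a $step(1)$ substep follows a $fin$ substep, or at the end of the sequence.

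\textbf{Forward direction.} Given $T\in f_I(H)$ with $P_I(T)$, we show $\mathrm{enc}(T)\in f_S(H)$; then $P_S(\mathrm{enc}(T))$ holds trivially. Every operation appears, since every invocation and response lies in some interval. Each operation has a $step(1)$ substep followed later by its $fin(2)$, because its invocation's interval is no later than its response's interval, and within a block the invocation part precedes the response part. Happens-before is respected: if $o_a <_H o_b$, then $res(o_a)$ lies in some $R_x$ and $inv(o_b)$ lies in some $I_y$ with $x<y$, and $o_a$'s invocation interval is at most $x$. Hence every substep of $o_a$ lies in a block at most $x$, and every substep of $o_b$ lies in a block at least $y$.

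\textbf{Backward direction (the main obstacle).} Given $S'\in f_S(H)$ with $P_S(S')$, by construction $S'$ is a prefix of $\mathrm{enc}(T)$ for some $T$ with $P_I(T)$. The difficulty is that $T$ may mention operations outside $H$, or $S'$ may be a proper prefix, so $T$ need not lie in $f_I(H)$. First, $S'\in f_S(H)$ requires every operation of $H$ to have its final substep in $S'$, and $S'$ contains no substeps of operations outside $H$, since every substep in $f_S(H)$ is drawn from $ops(H)$. So the blocks of $S'$ cover exactly $ops(H)$, possibly cutting off the last block mid-way. The plan is to take the prefix $T'$ of $T$ consisting of the intervals whose blocks $S'$ touches. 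In the last touched interval, shrink $I$ and $R$ to the elements actually appearing in $S'$; this is harmless because no further elements of $H$ remain. The remaining delicate point is that $P_I(T')$ must still hold after this truncation. To handle it, I would add to the definition of $P_S$ that $S'$ end on a block boundary, that is, $P_S(S')$ holds iff some $T$ with $P_I(T)$ has $S'$ as a prefix of $\mathrm{enc}(T)$ that cuts only at block boundaries or inside an $R$-part. I would then check that the resulting predicate is still prefix-closed in the required sense. An alternative route is to appeal to prefix-preservation of $P_I$ at interval granularity.

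Once $P_I(T')$ is established, verify $T'\in f_I(H)$. Disjointness and the requirement that each operation's invocation interval be no later than its response interval follow from the encoding. The constraint that $r<i$ forces $x<y$ follows from $S'$ respecting $<_H$ together with the within-block ordering of invocations before responses. This completes the equivalence.
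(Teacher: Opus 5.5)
Your forward direction is fine. Encoding intervals as blocks and taking a prefix closure is also a legitimate alternative to the paper's route. The paper instead decodes a substep sequence in which every operation occurs exactly twice, grouping maximal runs of first occurrences into $I$-sets and of second occurrences into $R$-sets, and sets $P_S(S)$ to $P_I$ of the decoded sequence. Your version gains genuine prefix preservation, which the paper's ``exactly twice'' clause lacks.

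However, your backward direction is left open at exactly the step you flag. Shrinking $I_x$ and $R_x$ in the last touched interval does not preserve $P_I$ in general, and your proposed repair fails on two counts. First, demanding that $S'$ cut only at block boundaries or inside an $R$-part makes $P_S$ not prefix-preserving in general: a prefix of an accepted $S'$ may cut inside an $I$-part, and then no other witness need exist. So the resulting $P_S$ need not lie in $Pred(substep)$. Second, it would not help anyway, since a cut inside $R_x$ still leaves a shrunk $R_x$, and the result is not an interval-prefix of $T$.

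The missing idea is that, because $H$ is complete, the cut never falls strictly inside a block, so nothing is ever shrunk. Suppose $S'$ ends strictly inside block $x$ of $\mathrm{enc}(T)$, and let $v$ be the first element of that block not in $S'$.
\begin{itemize}
\item If $v$ is in the $I$-part, then $S'$ contains some $\langle o, step(1)\rangle$ with $o\in ops(H)$. Its response lies in some $R_z$ with $z\geq x$, hence after the cut. So the complete operation $o$ has no final substep in $S'$, contradicting $S'\in f_S(H)$.
\item If $v$ is in the $R$-part, the invocation of $v$'s operation lies in some $I_y$ with $y\leq x$, hence in $S'$. The same contradiction arises.
\end{itemize}
So $S'=\mathrm{enc}(I_1,R_1,\dots,I_k,R_k)$ for some $k$, and $P_I$ of this interval-prefix follows directly from prefix preservation of $P_I$.

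This argument needs a restriction you have not imposed, and so does your claim that disjointness and invocation-before-response ``follow from the encoding.'' You must restrict the witnesses $T$ in the definition of $P_S$ to well-formed sequences: each operation's invocation and response occur exactly once, with the invocation in an interval no later than the response's. As written, a $T$ with $P_I(T)$ may contain a response whose invocation never occurs. That lets $P_S$ accept a cut inside an $R$-part whose shrunk version $P_I$ may reject, which breaks the equivalence. This restriction is the job the paper's ``exactly twice'' clause performs.

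Separately, your one-line justification of $P_S(S)=P_S(S|_A)\land P_S(S|_B)$ is too quick. The right-to-left direction needs a single $T$ whose encoding matches how $S$ interleaves the two objects, and that can force splitting intervals of the per-object witnesses. The paper's own predicate has the same weakness, though.
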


\begin{proof}
    Let $H\in\H$ be some history and $P_I\in Pred(interval)$ be some arbitrary predicate defined without loss of generality.
    We prove the existence of a predicate $P_S\in Pred(substep)$ such that $H$ is substep-linearizable with respect to $P_S$ iff $H$ is interval-linearizable with respect to $P_I$.
    For an arbitrary sequence of substeps $S\in f_S(H)$ such that each substep occurs exactly twice, we define a sequence of alternating sets of invocations and responses $S_0$.
    First, combine maximal adjacent sets of the first occurrence of $o_K$'s in $S$ into sets $I_1,I_2,...,I_n$.
    Combine maximal adjacent sets of the second occurrence of $o_K$'s in $S$ into sets $R_1,R_2,...,R_n$.
    Let $S_0=I_1,R_1,I_2,R_2,...,I_n,R_n$.
    Let $P_S(S)$ be the predicate that holds if and only if the following conditions hold:
    \begin{itemize}
        \item Each $o_k$ in the history $H$ appear exactly twice in $S$
        \item $P_I(S_0)$
    \end{itemize}
    The equivalence between $P_I$ and $P_S$ follows from the definitions of interval and substep linearizability.
\end{proof}


Note that the function $f(H)$ for substep linearizability allows for substeps of incomplete operations, while $f(H)$ for interval linearizability only allows sequences such that each invocation has a matching response.
Therefore, in general we can define predicates that are substep-linearizable but not interval-linearizable.
However, we define a class of predicates $NBPred(substep)\subset Pred(substep)$ such that interval-linearizability subsumes substep linearizability over these predicates.
Intuitively, this set defines predicates $P$ such that if $P$ is substep-linearizable with respect to an incomplete history $H$, it is also substep linearizable with respect to a history $H'\in C(H)$.

\begin{definition}[Non-blocking predicates]\label{def:nb-preds}
    A Non-blocking predicate $P\in NBPred(substep)$ is a predicate
    \begin{equation*}
    P\in Pred(substep):\forall H\in\H: (\exists S\in f(H):P(S)\implies
    \exists H'\in C(H):\exists S'\in f(H'):P(S'))
    \end{equation*}
    
\end{definition}

Even with this restriction, finding a predicate defined for interval linearizability that is equivalent to an arbitrary predicate defined for substep linearizability is less straightforward since a specification could require more than two substeps. However, a key observation is that some sequence $S\in f_I(H)$ can be interpreted as representing each of the completions of the original history $C(H)$. In other words, no information about the original history is lost when computing $f(H)$. This allows us to test any substep predicate $P_S$ using the set of sequences $f_I(H)$.

\begin{theorem}\label{lem:int-subsumes-step}
    Interval linearizability subsumes substep linearizability under non-blocking predicates.
\end{theorem}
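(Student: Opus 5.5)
The plan is to fix an arbitrary $P_S \in NBPred(substep)$ and build $P_I \in Pred(interval)$ that reconstructs the underlying history from an interval sequence and then asks whether some substep linearization of that history satisfies $P_S$. The key observation is that an interval sequence $S_0 = I_1, R_1, \dots, I_n, R_n \in f_I(H)$ carries every invocation and response of $H$, and invocations and responses are tagged with their unique indices in $H$. From $S_0$ alone we can therefore recover the complete history $\hat H(S_0)$, obtained by sorting all events of $S_0$ by index. Define
\begin{equation*}
P_I(S_0) \;:=\; \exists T \in f_S(\hat H(S_0)) : P_S(T).
\end{equation*}
With this choice, $P_I$ holds for one element of $f_I(H)$ exactly when it holds for all of them, and that happens exactly when the complete history $H$ is substep-linearizable under $P_S$.

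Next I will prove the equivalence for an arbitrary $H \in \H$. For complete $H$, the argument has three steps.
\begin{enumerate}
\item $f_I(H)$ is nonempty. The single interval $(I,R)$ containing all of $H$'s events does not satisfy the happens-before rule, so instead we take one interval per event in $H$ order, putting each invocation with an empty response set and each response with an empty invocation set, or merging consecutive events suitably.
\item Every $S_0 \in f_I(H)$ satisfies $\hat H(S_0) = H$.
\item Combining these, $\exists S_0 \in f_I(H) : P_I(S_0)$ holds iff $\exists T \in f_S(H) : P_S(T)$.
\end{enumerate}

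For incomplete $H$, I use the completion convention, $f_I(H) = \bigcup_{H' \in C(H)} f_I(H')$, since interval linearizability is only defined on complete histories. Then $H$ is interval-linearizable under $P_I$ iff some $H' \in C(H)$ is substep-linearizable under $P_S$. One direction is exactly Definition~\ref{def:nb-preds}: if $P_S$ holds on some $T \in f_S(H)$, then it holds on some $T' \in f_S(H')$ for some $H' \in C(H)$. For the converse, take $T' \in f_S(H')$ with $P_S(T')$, and delete from it the substeps of operations that $H'$ dropped. This yields a sequence in $f_S(H)$, because operations that were pending in $H$ need neither appear nor be final. It also satisfies $P_S$, provided dropping operations that were never invoked from the viewpoint of $P_S$ preserves validity.

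I expect that last point to be the main obstacle. Prefix-preservation only lets us truncate sequences; it does not let us delete interleaved substeps of a dropped operation. I would try first to absorb this case into the non-blocking assumption. The alternative is to avoid deletion altogether: in $T'$, replace each operation that $H'$ dropped by itself with an erased response. This is allowed because substep sequences may contain non-final steps of incomplete operations, but it requires checking that $P_S$ is insensitive to response values of non-final steps.

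Finally, I must verify that $P_I$ is a legitimate element of $Pred(interval)$. It must be prefix-preserving and decompose per object. Per-object decomposition follows from composability of substep linearizability (Theorem~\ref{thm:substep-composable}) applied to $\hat H(S_0)$ restricted to each object. For prefix-preservation I would restrict the definition to prefixes of $S_0$ that are themselves valid interval sequences, and use the fact that a prefix reconstructs a prefix history whose substep linearizations arise by truncation.
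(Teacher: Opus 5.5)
Your $P_I$ is the paper's idea. The paper also sets $P_I(S_0)$ to ``the history rebuilt from $S_0$ is substep-linearizable under $P_S$'', and it uses Definition~\ref{def:nb-preds} for the forward direction exactly as you do. The paper rebuilds $H_{S_0}$ by ordering events consistently with the interval order, so its happens-before extends $<_H$ and one-event-per-interval sequences reproduce each completion. Your tag-sorting instead gives $\hat H(S_0)=H$ for every $S_0\in f_I(H)$ when $H$ is complete, which makes the complete-history equivalence correct and cleaner.

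However, your claim that $P_I$ is prefix-preserving fails under tag-sorting, because a prefix of $S_0$ need not rebuild a prefix of $\hat H(S_0)$. Take a register initialized to $0$, let $H = inv(r)\,inv(w(5))\,res(r\to 5)\,res(w)$, and let $S_0=(\{inv(r)\},\{res(r)\}),(\{inv(w)\},\{res(w)\})\in f_I(H)$. Then $P_I(S_0)$ holds, since $w$ then $r$ is a valid substep linearization of $H$. But the prefix $(\{inv(r)\},\{res(r)\})$ rebuilds a lone read returning $5$, which fails. The paper never checks $P_I\in Pred(interval)$. Still, its interval-order reconstruction is the one a truncation argument needs: in $H_{S_0}$, every operation completed within a prefix happens-before every operation invoked after it, so a substep linearization can be cut there.

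The incomplete-history converse, which you leave open, is a genuine obstruction rather than a technicality. Your description of it is also off. Operations that $H'$ dropped have no substeps in $T'\in f_S(H')$; the real mismatch is with operations that $H'$ completed, whose substeps in $T'$ refer to $\langle i,r\rangle\notin ops(H)$. Definition~\ref{def:nb-preds} supplies only the forward implication, and the converse can fail.

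Here is a counterexample. Let $P_S$ be the register specification that also rejects every substep of an operation whose response is $\bot$; this is the kind of predicate Section~\ref{sec:substep-non-blocking} says it considers. It lies in $NBPred(substep)$, because an accepted sequence is also a substep linearization of the completion that drops all pending invocations. Yet $inv(w(5))\,inv(r)\,res(r\to 5)$ is not substep-linearizable, while its completion by $res(w)$ is. Since $f_I$ of the incomplete history contains $f_I$ of that completion, no choice of $P_I$ can satisfy the required equivalence on both histories.

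So this case needs an extra hypothesis, such as the converse of Definition~\ref{def:nb-preds}, or a restriction to complete histories, where $NBPred$ is vacuous. The paper's one-line step ``$S\in f_S(H)$'' passes over the same response mismatch, so you will not find the missing argument there.
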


\begin{proof}
    Let $H\in\H$ be some history and $P_S\in NBPred(substep)$ be some arbitrary predicate defined without loss of generality.
    We prove the existence of a predicate $P_I\in Pred(interval)$ such that $H$ is substep-linearizable with respect to $P_S$ if and only if $H$ is interval-linearizable with respect to $P_I$.
    For an arbitrary sequence $S_0=I_1,R_1,I_2,R_2,...,I_m,R_m\in f_I(H)$, we define the history $H_{S_0}$ as some ordering of the invocations and responses in $S_0$ while maintaining that if $a$ happens before $b$ in $H_{S_0}$, then $b$ was in the same set as $a$ or a later one in the sequence $S_0$.
    Since all possible choices of $H_{S_0}$ result in the same happens-before order between operations, the result is deterministic.
    Let $P_I$ be the predicate that holds for a sequence $S'$ if and only if $H_{S'}$ is substep linearizable with respect to $P_S$.
    We prove that this predicate holds for $H$ if and only if $P_S$ holds for $H$.
    If $\exists S\in f_S(H):P_S(S)$, then by definition of $NBPred$, for some $H'\in C(H)$, $\exists S'\in f(H'):P_S(S')$.
    By definition of interval linearizability, we have that $\forall H_C\in C(H): \exists S_0\in f_I(H) :H_{S_0}=H_C$.
    Let $S_0$ be the sequence such that $H_{S_0}=H'$.
    By definition, if $\exists S'\in f_S(H'):P_S(S)$ then $\exists S\in f_S(H_{S_0}):P_S(S)$ which is the definition of $P_I(S_0)$. Since $\exists S'\in f_I(H):S'=S_0$, we have that $\exists S'\in f_I(H): P_I(S')$.
    Finally, we assume $\exists S'\in f_I(H): P_I(S')$ and prove $\exists S\in f_S(H):P_S(S)$.
    By definition of $P_I$, $\exists S\in f_S(H_{S'}): P_S(S)$.
    But since by construction, $H_{S'}$ preserves the happens-before order of operations in $H$, $S\in f_S(H)$.
    Therefore, $\exists S\in f_S(H):P_S(S)$ which completes the proof.
\end{proof}




\section{Case Study: Atomic Exchanger}\label{sec:exchanger}

The atomic exchanger is a common example of a non-linearizable object, which can be specified as set-linearizable and motivates the concurrency-aware linearizability proof system~\cite{concurrency-aware-lin15}.
It offers one operation $exchange(v)$ which either atomically exchanges $v$ with a concurrent operation, producing pairs of invocations such that $exchange(v_1) = v_2 || exchange(v_2) = v_1$, or $exchange(v_1) = \bot$ and $exchange(v_2) = \bot$.
That is, either two operations successfully exchange values atomically, or no exchange can occur and failure is indicated via $\bot$. 

This case study has already been explored by the state of the art, and is known to be set- and interval-sequentially specifiable.
We first show an interval-sequential specification, then show that a similarly convenient substep-sequential specification is possible.

Our interval-sequential specification is a state machine with unlabeled states $q$.
Each invocation $i$ has a proposed value $prop(i)$, and each response $r$ has a received value $recv(r)$.
Transitions are accepted by $\delta(q, I, R, q')$ such that either (1) $I = \{ i \}$, $R = \{ res(i) \}$, and $recv(res(i)) = \bot$; or (2) $I = \{ i_1, i_2 \}$, $i_1 \neq i_2$, $R = \{ res(i_1), res(i_2) \}$, $prop(i_1) = recv(res(i_2))$, and $prop(i_2) = recv(res(i_1))$.
That is, any pair of concurrent operations may exchange values, or a non-paired operation may fail and receive $\bot$.
We only need to specify pairs of invocations and responses, since larger numbers of concurrent operations are automatically decomposed sequences of intervals by interval semantics.

A substep-sequential specification can emulate an interval-sequential one, as shown in Theorem~\ref{lem:sub-subsumes-int}.
When the intervals would allow groups of operations to execute as one (pairs in this case), we propose a convenience notation for substep-sequential state machines.
Rather than specify each substep individually, we accept atomic sequences of substeps.
That is, we define $\delta(q, S, q')$ where $S$ is a non-empty sequence of substeps, such that if $S = e_1, ..., e_n$, then the state machine accepts the sequence of transitions $q, e_1, ..., e_n, q'$ iff $\delta(q, S, q')$.
If the intention is to accept $e_1, ..., e_n$ and none of its permutations, then this notation is simpler than specifying each step individually and logically ruling out other invalid sequences.

With this notation, we can specify an atomic exchanger substep-sequentially as a state machine with unlabeled states $q$.
Each operation $o$ has a proposed value $prop(o)$ and a received value $recv(o)$.
We define $\delta(q, S, q')$ such that:
\begin{itemize}
    \item If $S = \langle o_1, step(1) \rangle, \langle o_2, fin(1) \rangle, \langle o_1, fin(2) \rangle$ and $o_1 \neq o_2$, then $prop(o_1) = recv(o_2)$ and $prop(o_2) = recv(o_1)$.
    \item If $S = \langle o, fin(1) \rangle$, then $recv(o) = \bot$.
\end{itemize}

The success case is an atomic triple of substeps: $o_1$ takes two steps that surround $o_2$, meaning that they must be concurrent.
Such concurrent operations may exchange values if the proposed and received values match.
Swapping $o_1$ and $o_2$ produces an equivalent symmetrical behavior.
The failure case is a single atomic step which may occur at any time.
As with the interval-sequential specification, cases with more than two concurrent operations are implicitly decomposed into sequential instances of the specified patterns.

The specification above is easy to translate into single steps.
Multi-step sequences can be encoded using state labels storing position within the step sequence and information needed by multi-step assertions.
For the exchanger, an implementation of mutual exclusion with explicit ``propose'' and ``accept'' labels is sufficient.

\section{Case Study: Cameron Desrochers' Queue}

High performance queue data structures are often necessary in multithreaded applications.
A common use case is managing thread pools.
Cameron's queue~\cite{cameron314_concurrentqueue} is a popular\footnote{
    It has 12.5k stars on GitHub as of writing and is a prominent search result for ``lock free queue C++''.
} C++ library whose design aims for competitive performance in scenarios where fully linearizable behavior is not necessary.
The queue respects insertion order per producer (thread that enqueues), but not globally since many use cases do not require it.

Since it relies heavily on operations with multiple visible side-effects, an interval-sequential specification would be difficult to write for similar reasons to our $wrr$ example in Section \ref{sec:interval-lin}.
We show the applicability of substep-sequential specification by specifying three representative operations: enqueue by producer, dequeue, and approximate size.

Per their documentation, Cameron's queue may be considered as a list of sub-queues indexed by producer.
Each producer enqueues to its own sub-queue, and dequeue operations non-deterministically select the head of any sub-queue, or $\bot$ if all sub-queues non-atomically appear empty.
While enqueue is a single atomic operation, dequeue is not.
It performs a non-atomic check on each sub-queue, and if it is possible to witness all sub-queues transiently be empty (even if they were never simultaneously empty), then $\bot$ may be produced.
Similarly, approximate size is specified to perform a non-atomic scan of sub-queue sizes, and may return any sum of sub-queue sizes that individually occur during the operation.

For a set of producers $P$, we specify Cameron's queue as a substep-sequential state machine.
For each operation $o$, we define its name $name(o) \in \{enqueue, dequeue, size\}$.
All operations have a value $val(o)$, which is either argument of an enqueue, the return value of a dequeue, or the return value of a size operation.
$prod(o) \in P$ is an enqueue operation's producer.
States $q$ are labeled with a total function from producers $p \in P$ to sub-queues $q.s(p) = v_1, ..., v_n$, where a sub-queue is a sequence of arbitrary values $v_1, ..., v_n$.
For every operation $o$ and producer $p$, $q.z(o, p) = n$ where $n \in \mathbb{N} \cup \{ \bot \}$.
That is, $q.z$ tracks a per-operation mapping from producer to witnessed sub-queue size, which is $\bot$ when no size has been witnessed.
In the initial state $q_0$, $q_0.s(p) = \emptyset$ for all $p$, and $q_0.z(o, p) = \bot$ for all $o$ and $p$.
We define $\delta(q, e, q')$ such that:
\begin{itemize}
    \item If $e = \langle o, fin(1) \rangle$, $name(o) = enqueue$, and $q.s(prod(o)) = v_1, ..., v_n$ then:
    \begin{itemize}
        \item[$\land$] $q'.s = q.s$ except $q'.s(prod(o)) = v_1, ..., v_n, val(o)$
        \item[$\land$] $q'.z = q.z$
    \end{itemize}
    \item If $e = \langle o, fin(1) \rangle$, $name(o) = dequeue$, and $(\exists p \in P : q.s(p) = v_1, ..., v_n) \land (v_1 = val(o))$ then:
    \begin{itemize}
        \item[$\land$] $q'.s = q.s$ except $q'.s(p) = v_2, ..., v_n$
        \item[$\land$] $q'.z = q.z$
    \end{itemize}
    \item If $e = \langle o, step(*) \rangle$, $name(o) \in \{dequeue, size\}$, there exists $p \in P$ such that $q.z(o, p) = \bot$ and $q.s(p) = v_1, ..., v_n$, then:
    \begin{itemize}
        \item[$\land$] $q'.z = q.z$ except $q'.z(o, p) = n$
        \item[$\land$] $q'.s = q.s$
    \end{itemize}
    \item If $e = \langle o, fin(*) \rangle$, $name(o) = dequeue$, $val(o) = \bot$, and $\forall p \in p : q.z(o, p) = 0$, then $q' = q$.
    \item If $e = \langle o, fin(*) \rangle$, $name(o) = size$, $\forall p \in P : q.z(o, p) \neq \bot$, and $val(o) = \sum_{p \in P} q.z(o, p)$, then $q' = q$.
\end{itemize}

The enqueue operation is atomic and takes a single substep to add the element $val(o)$ to the producer's sub-queue.
The success case of the dequeue operation takes a single substep when one of the sub-queues is observed to have $val(o)$ at its head, in which case that element is dequeued.
Otherwise, dequeue and size share $|P|$ substeps: they non-deterministically observe each sub-queue's size one by one and accumulate the results into $q.z$.
For dequeue, the operation may report $\bot$, indicating a possibly empty queue, if all sub-queues transiently have size $0$.
For the size operation, the observed sizes must sum to the operation's reported value.

As an extension, consider that Cameron's queue also has optimized bulk enqueue and dequeue operations which operate on sequences of elements to enqueue or dequeue at once.
These operations are non-atomic, and can also be given substep-sequential specifications.
The bulk enqueue repeats the single-value semantics for each value it enqueues, assigning each value a substep.
The bulk dequeue operation is involves a request for $M$ elements, as well as a sequence of $N \leq M$ actually dequeued elements.
Comparably to the $\bot$ rule for single dequeues, if $N < M$, it must be possible to non-atomically view a sequence of sub-queues whose length sums to $N$, and which contain the dequeued elements in the right order.

An advantage of substep-sequential specifications in practice is that specifying these operations should not interfere much with the specification above.
They may involve additional state labels or operation types, but in principle the change should be additive.

\section{Case Study: Database Transactions with Strong Snapshot Isolation}\label{sec:snapshot-isolation}

The semantics of concurrent database transactions are often studied~\cite{ansi-sql-1995,jepsen-snapshot-isolation} when describing the allowable behaviors of a given database implementation under concurrency.
We show a substep-sequential specification of strong snapshot isolation~\cite{strong-snapshot-2006}, where each operation is a transaction, and snapshots be placed at any point between the transaction's invocation and response.
Note that assuming strong snapshot isolation is a convenience.
Schultz and Demirbas~\cite{mongodb-transactions-2025} specify a more complete transaction system as totally ordered sequences of steps, which may also be viewed as a substep-sequential specification.

A transaction is a sequence of reads and writes of values $v \in V$ in relation to a set of keys $K$.
We call reads and writes \emph{effects}.
A read has the form $read(k) \to v$, and a write has the form $write(k) := v$.
Transactions are assumed to commit once all reads and writes have completed.
We ignore aborted transactions since snapshot isolation never allows reading aborted writes.
These semantics are broadly similar to existing studies of database semantics, and make no particular assumptions about architecture.

Strong snapshot isolation can be informally defined as the situation where otherwise linearizable transactions exhibit \emph{write skew}, and concurrent overlapping writes are not permitted (in practice they abort).
Consider two concurrent transactions $t_1$ and $t_2$ that read overlapping keys $x$ and $y$ (which we assume refer to integers here).
Assume $t_1$ writes $x - y$ to $x$ and then commits.
Then, $t_2$ writes $x - y$ to $y$ and also commits.
The result is non-linearizable, since neither instance of $x - y$ accounted for the other transaction's update of $x$ or $y$ respectively.
Linearizably, $t_1$ should have written $x - (x - y)$, or $t_2$ should have written $(x - y) - y$.
If the two transactions both wrote to $x$ or $y$, the execution would be invalid.

Our substep-sequential specification of this phenomenon is a state machine with states $q$, labeled by a function $q.kv : K \to (V \times \mathbb{Z})$ mapping each key to a value and a version number, and a function $q.s$ from operations to either a snapshot in the partial function $K \rightharpoonup (V \times \mathbb{Z})$ or $\bot$ indicating no snapshot.
For convenience value-version pairs may be decomposed as, $value(\langle v, * \rangle) = v$ and $version(\langle *, z \rangle) = z$.
Initially, $q_0.kv$ assigns an arbitrary value and version number 0 to every key, and $q.s$ assigns $\bot$ to every operation.
Each operation has an associated transaction $t(o)$ which is a sequence of effects.
We write the set of all keys read by reads in $t(o)$ as $reads(o)$.
Note that this is not just the set of $k$ where $read(k) \to *$ is in $t(o)$: $reads(o)$ excludes reads where $k$ is overwritten before being read.
The transition relation $\theta(q, e, q')$ is defined such that:
\begin{itemize}
    \item If $e = \langle o, step(1) \rangle$, then $q'.s(o) = s$, where $s$ is defined for all $k \in reads(o)$, $s(k) = q.kv(k)$, and undefined otherwise.

    \item If $e = \langle o, fin(2) \rangle$, then let $f_1, ..., f_n$ be the sequence of effects in $t(o)$, there exists $kv_0, ..., kv_n \in K \rightharpoonup V$, inductively defining $kv_{p+1}$ in terms of $f_{p+1}$ and $kv_p$, with initially $kv_0 = \emptyset$, such that:
    \begin{itemize}
        \item If $f_{p+1} = read(k) \to v$, then $v = kv_p(k)$ if $kv_p(k)$ is defined, otherwise $v = value(q.s(o)(k))$.
        In either case, $kv_{p+1} = kv_p$.
        \item If $f_{p+1} = write(k) := v$, then $kv_{p+1} = kv_p$ except $kv_{p+1}(k) = v$.
    \end{itemize}
    Given the final $kv_n$, for all $k \in K$, $q'.kv(k) = (kv_n(k), version(q.kv(k))+1)$ and $version(q.kv(k)) = version(q.s(o)(k))$ where $kv_n(k)$ is defined, and $q'.kv(k) = q.kv(k)$ otherwise.
\end{itemize}

Intuitively, our specification splits each transaction into two steps.
First, we read a snapshot $q.s(o)$ of the database's relevant keys $read(o)$ in one step.
This substep marks where the snapshot occurs, on which the later atomic write will base its view of the database, including which version of each key was observed.
The second substep is specified as an interpreter for effect sequences, accumulating the set of writes performed as $kv_p$.
A transaction can read its own writes according its internal order of operations, so we specify read operations both in terms of the snapshot $q.s(o)$ and already-performed writes $kv_p$.
The final $kv_n$ contains each key-value pair that is atomically written to $q'.kv$, iff the value being overwritten in $q.kv$ has the same version as the value in $q.s(o)$.
If the version differs between snapshot and write, there were concurrent writes to the same key and the execution must be rejected.

\section{Related Work}

Beyond the definitions of linearizability discussed in this paper~\cite{interval-lin18, classic-lin90, set-lin94}, many variants have been proposed in the literature.
Derrick et al.~\cite{quiescent-consistency-2014} define and give a proof method for quiescent consistency, originally proposed by Shavit \& Herlihy~\cite{art-of-multiprocessor-programming-2008}, which allows operations to execute atomically in any order as long as at least one operation is pending.
We show quiescent consistency is expressible using X-CCorrectness in Appendix~\ref{sec:quiescent-consistency}.
Haas et al.~\cite{local-lin16} define local linearizability for a restricted category of containers, which asserts linearizability when the history is projected onto values coming from any specific thread.
Rinberg \& Keidar~\cite{intermediate-lin23} define intermediate-value linearizability in which numerical values can vary non-linearizably within a range.
Hemed et al.~\cite{concurrency-aware-lin15} formalize set-linearizability with their definition of concurrency-aware linearizability and show a technique for verifying said definition.
All of these have restrictive conditions, are not composable, or are less expressive than interval linearizability.

Goubault et al.~\cite{goubault18} argue that any reasonable definition of linearizability satisfies a set of axioms.
Using these axioms, they show that interval-linearizability is the weakest possible correctness condition for concurrent histories, which by our proof of equivalence would imply the same for substep linearizability.

Oliveira Vale et al.~\cite{compositional-lin23} show that linearizability and modular composition are intrinsically related, and Eashan et al.~\cite{complete-prog-lin-2026} apply the concept to program logics.
They define compositional linearizability, a relaxation of linearizability equivalent to interval linearizability with incomplete histories.
Their work improves modular verification, whereas substep linearizability improves sequential specifications.
We leave integration between the two and investigation of substep linearizability without history completion as future work.

\section{Conclusion}

In this paper we present substep linearizability, a correctness condition that allows for operations to take effect at multiple atomic points.
We show how this definition is composable, non-blocking and equivalent to interval linearizability in its ability to capture concurrent behavior.
We also give examples of how the specification of many objects are simpler to represent using substep linearizability.
We leave as future work the development and evaluation of proof systems and model checking techniques to verify conformance with substep-sequential specifications in practice.
However, we are optimistic that substeps having the same basis in sequences of atomic steps as plain linearizability will enable significant re-use of existing techniques.

\bibliography{main}

\appendix

\section{Additional X-CCorrectness Forms}\label{sec:additional-lins}

\subsection{Set Linearizability}\label{sec:set-lin}

Set linearizability~\cite{set-lin94} was proposed by Neiger as an extension to classic linearizability that allows multiple operations to be linearized at the same point.
Operations sharing the same point can be evaluated together non-atomically.
It is subsumed by interval linearizability, but we define it here as another demonstration of X-CCorrectness.

In this case, $f(H)$ is a sequence of sets of operations $O_1, ..., O_n$ such that:
\begin{align*}
    O_1, ..., O_n \in f(H) \text{ iff } & (ops(H) = \bigcup_{x \in 1..n} O_x \\
        & \land \forall x \in \{1..n\} : \forall o, o' \in O_x : \lnot (o <_H o' \lor o' <_H o) \\
        & \land \forall x, y \in \{1..n\} : \\
        & \quad ( O_x \cap O_y = \emptyset \\
        & \quad \land x < y \implies \forall o_x \in O_x, o_y \in O_y : \lnot (o_x <_H o_y)))
\end{align*}

The definition above is equivalent to these rules:
\begin{itemize}
    \item Every operation $o \in ops(H)$ appears in exactly one set $O$.
    \item All operations in the same set must be concurrent (no pairs may be related by $<_H$).
    \item Operations in an earlier set may not happen-after any operation in a later set.
\end{itemize}

$P(O_1, ..., O_n)$ is a set-sequential specification, which might be modeled as a state machine accepting transitions labeled by any set of operations $O$.

\subsection{Sequential Consistency}\label{sec:seq-consistent}

Sequential consistency~\cite{lamport-sc79} (Sequential-CCorrectness by our terminology) is a foundational correctness condition for concurrent objects proposed by Lamport.
Similar to linearizability, operations are mapped to a sequential history, however sequential consistency only requires happens-before ordering between operations with the same process id.
Let $pid(o_i)$ return the process id of operation $o_i$.
$f(H)$ is defined such that:

\begin{equation*}
    o_1, ..., o_n \in f(H) \text{ iff } \forall x, y \in 1..n : (x < y\land pid(o_x)=pid(o_y)) \implies \lnot (o_x >_H o_y)
\end{equation*}

$P(o_1, ..., o_n)$ accepts valid sequences of atomic operations applied to the underlying object.
However, $f(H)$ is more permissive, and fails condition 2 of Theorem~\ref{lem:compose-conditions}.

\begin{proposition}
Sequential consistency is not composable.
\end{proposition}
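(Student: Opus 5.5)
The plan is to exhibit a concrete history $H$ over two objects $A$ and $B$ for which condition 2 of Theorem~\ref{lem:compose-conditions} fails. That theorem is an ``if and only if'', so the failure shows $H$ is not composable under Sequential-CCorrectness, and therefore sequential consistency is not composable. Alternatively, I could use the direct predicate from part \textbf{(b)} of that proof. I would commit to the following witness. Take two processes $p_1, p_2$. Process $p_1$ performs $a_1$ on $A$ and then $b_1$ on $B$, sequentially. Process $p_2$ performs $b_2$ on $B$ and then $a_2$ on $A$, sequentially. The history is ordered so that
\begin{equation*}
a_1 <_H b_1, \qquad b_2 <_H a_2, \qquad a_2 <_H a_1, \qquad b_1 <_H b_2 .
\end{equation*}
That is, in real time $p_2$'s $A$-operation precedes $p_1$'s $A$-operation, and $p_1$'s $B$-operation precedes $p_2$'s $B$-operation.

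The key point is that real-time order across different processes is ignored by sequential consistency. On $H|_A$ the two operations have different pids, so $S_A = a_1, a_2$ lies in $f(H|_A)$ even though $a_2 <_H a_1$. Likewise $S_B = b_2, b_1$ lies in $f(H|_B)$.

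Next I check that no $S \in f(H)$ satisfies both $S|_A \leq_p S_A$ and $S|_B \leq_p S_B$. Every $S \in f(H)$ contains all four operations, so these prefix relations force equality: $S|_A = a_1, a_2$ and $S|_B = b_2, b_1$. The per-process constraints in $f(H)$ then force $a_1$ before $b_1$ and $b_2$ before $a_2$. Combining these gives the chain
\begin{equation*}
a_1 \text{ before } a_2, \quad b_2 \text{ before } a_2 \text{ (fine)}, \quad b_2 \text{ before } b_1, \quad a_1 \text{ before } b_1 .
\end{equation*}
This chain alone is not cyclic, so the construction needs adjusting. I would choose the orders to create a cycle instead: take $S_A = a_2, a_1$ and $S_B = b_1, b_2$. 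Together with the program orders $a_1 \to b_1$ and $b_2 \to a_2$, this yields the cycle $a_1 \to b_1 \to b_2 \to a_2 \to a_1$. Both $S_A$ and $S_B$ are valid members of $f(H|_A)$ and $f(H|_B)$, since any order is permitted for operations with different pids.

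The main obstacle is exactly this step. I must pick the history so that the per-object orders plus the cross-object per-process program order form a cycle, while each per-object order remains admissible. I must also confirm that $H$ is well formed, meaning no process has two overlapping pending operations; the sequential real-time layout given above guarantees this. Finally, I would instantiate the predicate from part \textbf{(b)}, $P(S') = S'|_A \leq_p S_A \land S'|_B \leq_p S_B$. This predicate is prefix-preserving and holds on $S_A$ and $S_B$ but fails for every $S \in f(H)$, which contradicts Definition~\ref{def:composability}.
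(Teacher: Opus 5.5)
\textbf{The gap: the history you commit to does not exist.} Your cycle argument is right, but the four relations you impose, $a_1 <_H b_1$, $b_1 <_H b_2$, $b_2 <_H a_2$, $a_2 <_H a_1$, form a cycle in $<_H$. On any history, $<_H$ is a strict partial order: from $res(x) < inv(y)$ and $res(y) < inv(z)$ you get $res(x) < inv(y) < res(y) < inv(z)$. So your relations would give $a_1 <_H a_1$. No ``sequential real-time layout'' realizes them, and your claim that the layout guarantees well-formedness is false.

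This is not a bookkeeping slip; it is exactly the phenomenon you are proving. If real time agreed with both per-object orders you want to exhibit, real time would already contain the cycle, so no such history could exist. That is precisely why linearizability is composable.

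The repair is to impose only the program-order constraints $a_1 <_H b_1$ and $b_2 <_H a_2$. For example, take $H = inv(a_1), inv(b_2), res(a_1), res(b_2), inv(b_1), inv(a_2), res(b_1), res(a_2)$. Here real time actually gives $a_1 <_H a_2$ and $b_2 <_H b_1$, the opposite of your chosen per-object orders. That does not matter, because these pairs have different pids. So $S_A = a_2, a_1 \in f(H|_A)$ and $S_B = b_1, b_2 \in f(H|_B)$ still hold. Your cycle $a_1 \to b_1 \to b_2 \to a_2 \to a_1$ for any $S \in f(H)$ goes through unchanged, since it never used the cross-process relations. In the write-up, also drop the abandoned first choice $S_A = a_1, a_2$, $S_B = b_2, b_1$.

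\textbf{Comparison with the paper.} With that repair, your proof is the paper's. It uses the same two-process, two-object witness (your $a_1, b_1, b_2, a_2$ are its $o_1, o_2, o_3, o_4$) and the same failure of condition 2 of Theorem~\ref{lem:compose-conditions}. Your explicit predicate $P(S') = S'|_A \leq_p S_A \land S'|_B \leq_p S_B$ from part \textbf{(b)} is a slightly more self-contained variant, and it is admissible: it splits per object and is prefix-preserving.

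Your corrected orders are in fact the ones that work. The paper pairs $S|_A = (o_4, o_1)$ with $S|_B = (o_3, o_2)$, and that pair is realized by $S = o_3, o_4, o_1, o_2 \in f(H)$. To close the cycle it needs $(o_2, o_3)$, which is your $b_1, b_2$.
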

\begin{proof}
Consider a history of four operations $o_1,o_2,o_3,o_4$ such that $obj(o_1)=obj(o_4)=A$, $obj(o_2)=obj(o_3)=B$, $pid(o_1)=pid(o_2)=1$, $pid(o_3)=pid(o_4)=2$, $o_1<_Ho_2$ and $o_3<_Ho_4$. By definition, $f(H|_A)$ includes the sequence $(o_4, o_1)$ and $f(H|_B)$ includes $(o_3, o_2)$. However, due to the happens-before ordering of $H$, there does not exist a sequence $S\in F(H)$ such that $S|_A=(o_4, o_1)$ and $S|_B=(o_3, o_2)$. By Theorem~\ref{lem:compose-conditions}, $H$ is not composable which implies that sequential consistency is not composable.
\end{proof}

\subsection{Quiescent Consistency}\label{sec:quiescent-consistency}

Quiescent consistency is similar to sequential consistency and classical linearizability in that it requires that a permutation of the operations in the original history satisfy some predicate. However, it only prohibits reordering of operations $o_a, o_b$ if they act on the same object and are separated by a period of ``quiescence'' where the object has no pending operations, i.e. a point where all invoked operations have responded. Formally, define:
\begin{align*}
IsQ(o_a,o_b, H)\text{ iff }&obj(o_a)=obj(o_b)=A\\ &\land\exists o\in H|_A:res(o_a)\leq_{H|_A} res(o) <_{H|_A} inv(o_b)\\ &\quad\quad\quad\quad\quad\land (\lnot \exists o'\in {H|_A}:inv(o')<_{H|_A}  res(o)<_{H|_A}res(o'))
\end{align*}

With $a\leq_Hb$ being shorthand for $a=b\lor a<_H b$. If this predicate is satisfied, then we must have that any sequence generated orders $o_a$ before $o_b$. Thus, $f(H)$ returns all permutations of the operations in the original history such that:
\begin{equation*}
o_1, ..., o_n \in f(H) \text{ iff }\forall x, y \in 1..n : x < y \implies \lnot IsQ(o_y, o_x, H)
\end{equation*}

Then we have that $P(o_1, ..., o_n)$ checks if this sequence satisfies the sequential specification of each object. Using Theorem~\ref{lem:compose-conditions}, we can easily show that quiescent consistency is composable.
\begin{proposition}
Quiescent consistency is composable.
\end{proposition}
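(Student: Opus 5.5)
We verify the two conditions of Theorem~\ref{lem:compose-conditions} for the quiescent-consistency choice of $f$. The key structural fact is that $IsQ(o_a,o_b,H)$ only relates operations on the \emph{same} object $A$. It is also defined entirely in terms of the restricted history $H|_A$. Hence $IsQ(o_a,o_b,H) \iff IsQ(o_a,o_b,H|_A)$ whenever $obj(o_a)=obj(o_b)=A$, and $IsQ$ never holds across objects. We take the elements of $f(H)$ to be permutations of all of $ops(H)$, so the prefix relation in both conditions will in fact be equality.

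\textbf{Condition 1.} Take $S\in f(H)$ and set $S_A := S|_A$. This is a permutation of $ops(H|_A)$. Any pair $x<y$ in $S_A$ is also ordered $x<y$ in $S$, so $\lnot IsQ(o_y,o_x,H)$ holds. By the locality observation this gives $\lnot IsQ(o_y,o_x,H|_A)$. Hence $S_A\in f(H|_A)$, and trivially $S_A\leq_p S|_A$. The same argument applies to $B$.

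\textbf{Condition 2.} Take $S_A\in f(H|_A)$ and $S_B\in f(H|_B)$, and let $S := S_A\cdot S_B$ be their concatenation, or any interleaving. $S$ is a permutation of $ops(H)$, since $ops(H)$ is the disjoint union of $ops(H|_A)$ and $ops(H|_B)$. Consider a pair $x<y$ in $S$. If the two operations lie on different objects, $IsQ$ is false by definition. If they lie on the same object, their relative order in $S$ is the one from $S_A$ (or $S_B$), which already excludes $IsQ(o_y,o_x,H|_A)$, and hence $IsQ(o_y,o_x,H)$ is excluded as well. So $S\in f(H)$, with $S|_A = S_A$ and $S|_B=S_B$. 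Theorem~\ref{lem:compose-conditions} then yields composability for every $H$.

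The only step needing care is the locality claim that $IsQ$ evaluated in $H$ coincides with $IsQ$ evaluated in $H|_A$. This should follow directly from the definition, since every quantifier and order relation in $IsQ$ is already taken over $H|_A$. The point to stress is the contrast with sequential consistency. There, the cross-object constraints coming from $pid$ are what break condition 2. Here no cross-object constraint exists, so the unrestricted concatenation works.
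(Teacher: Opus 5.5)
Your proof is correct and matches the paper's argument. The paper likewise uses the locality of $IsQ$ to obtain $S|_A \in f(H|_A) \land S|_B \in f(H|_B) \iff S \in f(H)$, and then reads off both conditions of Theorem~\ref{lem:compose-conditions}; you just spell out the two directions (restriction for condition 1, interleaving for condition 2) that the paper leaves implicit.
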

\begin{proof}
First note that by definition of $IsQ$, $obj(o_a)=obj(o_b)=A\implies (IsQ(o_a, o_b, H)$ iff $IsQ(o_a, o_b, H|_A))$.
Thus, by definition of quiescent consistency, for any sequence $S$ composed of operations on objects $A$ and $B$, $S|_A \in f(H|_A)\land S|_B\in f(H|_B) \iff S\in f(H)$.
Therefore, we have that $\forall S\in f(H):(\exists S_A\in f(H|_A): S_A\leq_pS|_A)\land(\exists S_B\in f(H|_B): S_B\leq_pS|_B)$ and $\forall S_A\in f(H|_A), S_B\in f(H|_B): \exists S\in f(H): S|_A\leq_pS_A\land S|_B\leq_pS_B$.
By Theorem~\ref{lem:compose-conditions}, this implies that quiescent consistency is composable.
\end{proof}

\end{document}